\definecolor{slategray}{RGB}{112,138,144}
\pgfplotsset{compat=1.7}
\newif\ifintrmk
\newcommand{\intremark}[1]{\par\bigskip {\color{navy}\noin{\bf Internal Remark:} #1} \par\bigskip}
\definecolor{slategray}{RGB}{112,138,144}
\definecolor{hintergrundblau}{RGB}{0 ,43 ,54}
\definecolor{wichtigrot}{RGB}{220,50,47}
\definecolor{textgrau}{RGB}{131,148,150}
\definecolor{ultramarinblau}{RGB}{32,33,79}
\definecolor{kobaltblau}{RGB}{35,45,83}
\definecolor{navy}{RGB}{0 ,0,128}
\definecolor{leuchtgruen}{RGB}{0,181,26}
\definecolor{leuchtrot}{RGB}{255,77,6} 
\definecolor{structuresgray}{RGB}{162,173,191}
\newcommand{\E}{{\rm e}}
\newcommand{\I}{{\rm i}}
\newcommand{\dd}{{\rm d}}
\newcommand{\beq}{\begin{equation}}
\newcommand{\eeq}{\end{equation}}
\newcommand{\neeq}{\nonumber\end{equation}}
\newcommand{\Tr}{{\rm Tr }}
\newcommand{\cH}{{\mathcal H}}
\newcommand{\ps}{\psi}
\newcommand{\psq}{{\bar\psi}}
\newcommand{\veps}{\varepsilon}
\newtheorem{theorem}{\textsc{Theorem}}[section]
\newtheorem{lemma}{\textsc{Lemma}}[section]
\newtheorem{definition}{\textsc{Definition}}[section]
\newtheorem{remark}{\textsc{Remark}}[section]
\numberwithin{equation}{section}
\newcommand{\reftitel}[1]{{\sl #1}}
\newcommand{\bracket}[2]{\left\langle #1 \mid #2 \right\rangle}
\newcommand{\abs}[1]{\left|#1\right|}
\newcommand{\norm}[1]{\left\Vert#1\right\Vert}
\DeclareMathSymbol{\shortminus}{\mathbin}{AMSa}{"39}
\newcommand{\rd}{{\textup{d}}}
\newcommand{\R}{{\mathbb{R}}}
\newcommand{\C}{{\mathbb{C}}}
\newcommand{\N}{{\mathbb{N}}}
\newcommand{\cF}{{\mathcal{F}}}
\newcommand{\W}{{\mathcal{W}}}
\newcommand{\X}{{\mathbb{X}}}
\newcommand{\cG}{{\mathcal{G}}}
\newcommand{\cE}{{\mathcal{E}}}
\newcommand{\V}{{\mathcal{V}}}
\newcommand{\del}[1]{\frac{\partial}{\partial #1}}
\newcommand{\delt}[1]{\frac{\delta}{\delta #1}}
\newcommand{\ex}[1]{ \left\langle #1 \right\rangle }
\newcommand{\gint}[1]{\int \rd \mu_{#1}(\bar{\psi},\psi)}
\newcommand{\mynorm}[1]{{\left\vert\kern-0.25ex\left\vert\kern-0.25ex\left\vert #1 
    \right\vert\kern-0.25ex\right\vert\kern-0.25ex\right\vert}}
    \newcommand{\eunorm}[1]{ \left\| #1 \right\| }
\newcommand{\Xpos}{{\bf X}} 
\newcommand{\gammaT}{\gamma^{\rm c}} 
\newcommand{\gammaTz}{{}_{_0}\mkern-5mu\gamma^{\rm c}}
\newcommand{\Time}{T} 
\newcommand{\noin}{\noindent}
\renewcommand{\intremark}[1]{}
\begin{document}

\title[A rigorous Keldysh functional integral for fermions]{A rigorous Keldysh functional integral for fermions}
\author{Philipp Benjamin Aretz, Manfred Salmhofer} 
\address{%
Institut f\" ur Theoretische Physik, Universit\" at Heidelberg,
Philosophenweg~19, 69120 Heidelberg, Germany}

\email{aretz24@mit.edu,salmhofer@uni-heidelberg.de}

\date{\today. Based on the first author's bachelor thesis in physics at Heidelberg University}

\begin{abstract}
\noindent
We provide a mathematically rigorous Keldysh functional integral for fermionic quantum field theories. We show convergence of a discrete-time Grassmann Gaussian integral representation in the time-continuum limit under very general hypotheses. We also prove analyticity of the effective action and explicit bounds for the truncated (connected) expectation values $\gammaT_{m,\bar{m}}$ of the non-equilibrium system. These bounds imply clustering with a summable decay in the thermodynamic limit, provided these properties hold at time zero, and provided that the determinant bound $\delta_C$ and decay constant $\alpha_C$ of the fermionic Keldysh covariance are bounded uniformly in the volume. We then give bounds for these constants and show that uniformity in the volume indeed holds for a general class of systems. Finally we show that in the setting of dissipative quantum systems, these bounds are not necessarily restricted to short times.
\end{abstract}

\maketitle

\section{Introduction}
In his 1964 paper \cite{Keldysh}, Leonid Keldysh introduced a formalism for non-equilibrium quantum field theory that now carries his name (it is often also called Schwinger-Keldysh formalism, motivated by work of Julian Schwinger \cite{Schwinger},  see e.g.\ \cite{Kamenev} for a review). The analysis of functional integrals on the closed-time contour has since become one of the standard approaches to nonequilibrium phenomena in theoretical physics. In this paper, we construct the Keldysh functional integral for fermions in a simple but rigorous way, generalizing the treatment of equilibrium expectation values of \cite{Salmhofer:2009wm}. The construction is general, applying to hermitian as well as non-hermitian Hamiltonians, and to general even fermionic interactions $V$. The ultraviolet problem arising from the discontinuity of the time-ordered covariance is treated using the results of \cite{Pedra2008,Salmhofer2000}, see also \cite{dRS}. Under a condition on the spatial decay of the correlations for non-interacting fermions ($V=0$), we prove decay bounds for the truncated correlation functions uniformly in the system size. 

Our motivation to make the Keldysh functional integral rigorous is to make the full formalism of quantum field theory (QFT) available for rigorous studies of interaction effects in a time-dependent setting. Specifically, the present paper provides a setup for applying functional integral methods of constructive QFT to transport problems, such as the question of the emergence of the quantum Boltzmann equation (QBE) in the kinetic limit \cite{ESY,LukkSpohn,Lukkarinen}. A nice feature of the argument in \cite{ESY}  is that the form of the collision term for quantum systems comes out easily, without any resummation of graphs (as done, e.g.\ in \cite{Hugenholtz}), and that thereby, the proof of convergence to the QBE in the kinetic limit is translated to showing that the truncated $n=4$ and $n=6$-point correlation functions of a many-fermion system, are small for weak interaction $V$ and vanish in the limit $V\to 0$, in a certain norm. The setup given here allows to pursue this strategy: under the just stated conditions, the Schwinger-Dyson equation on the Keldysh contour implies the QBE, in a manner similar to the result of \cite{ESY}, and the conditions on the four- and six-point functions can be studied by fermionic functional integral techniques. 
This would, however, be beyond the scope of this paper and is left for future work. 

The Keldysh formalism is also used for many-boson systems, and one may ask why we concentrate on fermions here.  Fermionic theories have technical advantages over bosonic ones: fermion operators are bounded (the fermionic ladder operators satisfy $a(f) \le \norm{f}$ where $\norm{\cdot}$ is the norm on the one-particle Hilbert space $\cH$). In the Grassmann representation, they are given by nilpotent variables, and there is no convergence issue for Gaussian fermionic expectations, unlike bosonic ones. That is, if $A$ is invertible, then $C=A^{-1}$ is given by 
\beq
C_{x,y} 
=
(\det A)^{-1} \; 
\int \dd\psq\dd\ps\; \E^{-(\psq,A\ps)} \; \ps_x \psq_y
=
\int \dd\mu_C (\psq,\ps) \; \ps_x \psq_y
\eeq
irrespective of any positivity properties of the operator $A$ or its hermitian part (which would be required for the convergence of the integral if the $\ps$ were complex variables). Thus fermionic Gaussian integrals can be defined for more general covariance operators than bosonic ones. This is advantageous when one wants to study unitary time evolutions at real time, not just equilibrium expectation values. (Oscillatory functional integrals can be defined mathematically and have been studied, see, e.g.\ \cite{Mazzucchi}.)

Positivity conditions do play an important role when cumulants (connected correlation functions) are considered, but they concern not the covariance itself, but more technical objects used in proofs: in order to obtain good determinant bounds, one needs to preserve positivity of weighting matrices arising from interpolation. This is explained in detail in \cite{Salmhofer2000}. 

From the point of view of analysis, an essential purpose of this paper is to make explicit that the determinant bounds of \cite{Pedra2008} and their optimized form of \cite{dRS} also apply at real time. An important challenge when going from imaginary to real times is to give good bounds for the decay constants that also apply to infinite systems. In this context we discuss systems with hermitian and non-hermitian one-particle Hamiltonians. The latter are frequently used as an effective description of dissipative systems (the no-jump limit of Lindblad dynamics) \cite{Rotter}. If the non-hermitian part of the Hamiltonian is strictly negative (as suggested by standard heuristic second-order perturbative arguments), the decay constant has a much better behaviour as a function of time than in the hermitian case, see Lemma \ref{lem:determinant_bound_and_decay_const}. Ultimately, and especially for understanding the Boltzmann equation, we are interested in the hermitian case, however, and the non-hermitian part needs to be sent to zero. This is a particular choice of an $\I \veps$ condition.

\section{General Setup and Decay of Correlations}\label{sec:setupresult}
Let $ \varepsilon > 0$ and $\cH$ be a finite-dimensional Hilbert space with orthogonal basis $\{x\}_{x\in \Xpos}$, normalized to $\bracket{x}{x}= \varepsilon^{-1}$, associated fermionic Fock Space $\cF = \bigwedge\cH$ and (creation) annihilation operators $a^{(*)}$ defined in the standard way \cite{BratRob}. In applications, $\Xpos$ often corresponds to a discrete $d$-dimensional lattice with appropriate lattice spacing (spacing of $\varepsilon^{\frac{1}{d}}$ for a cubic lattice), finite side length $L$ and additional spin indices. Alternatively, $\Xpos$ simply labels the orthonormal basis of a quantum system with Hilbert space $\cH$ (in which case $\varepsilon =1$). For the sake of generality, we will use continuum notation, i.e. $(a^*, a)_{\Xpos} =\int_{\Xpos}\rd x \ a^*(x) a(x) = \varepsilon \sum_{x \in \Xpos} a^*(x) a(x)$ and $\delt{\bar{\psi}(x)} = \varepsilon^{-1} \del{\bar{\psi}(x)}$, where $\bar{\psi}$ is part of the Grassmann algebra isomorphic to $\cF$ generated by $\{\bar{\psi}(x)\}_{x \in \Xpos}$. We write $\C\{\bar{\psi}(x)\}_{x\in \Xpos}$ for this algebra. More information on the notation used and an introduction to calculus on Grassmann algebras and especially Grassmann Gaussian integration can be found in \cite[Appendix~B]{MSbook}. For a more abstract introduction to Grassmann algebras and their calculus see \cite[Section~1]{Feldman:2002}.

Our main goal is to generalize the proof of a convergent functional integral representation for equilibrium correlation functions of many-fermion systems presented in \cite{Salmhofer:2009wm} to time-dependent expectation values of observables $O \in \mathcal{L}(\mathcal{F},\mathcal{F})$, with the time evolution governed by a not necessarily self-adjoint Hamiltonian $H$ and the initial state given by a density operator $\frac{1}{Z_0}\rho_0$ with $\rho_0=\E^{-\beta(a^*,Qa)_{\Xpos}}$, with hermitian matrix $Q$, $\beta > 0$ and normalisation $Z_0 = \Tr(\rho_0)$. For time $\Time >0$, 
\begin{align}\label{eq:ex_A}
	\ex{O}_{\Time} = \frac{1}{Z_0} \Tr \left( \E^{iH^{\dag}\Time} O \E^{-iH\Time} \rho_0 \right).
\end{align}
For our finite fermionic system, $H$ can always be written as a polynomial in $a$ and $a^*$. We assume that $H=H_0+V$ splits into a quadratic part $H_0$ and a higher degree part $V$, with the usual interpretation that the special case $V=0$ models independent fermions and $V\ne 0$ means that they interact. We further assume that $H_0$ is of the form 
\beq\label{H0AB}
H_0=(a^*,(A-iB)a)_{\Xpos}
\eeq 
with $B\geq0$ and $A,B$ hermitian. Additionally, we assume $V$ to be normal. The potential $V$ is uniquely described by its interaction vertices $v_{m,\bar{m}}$, defined as the coefficients of the normal ordered form of $V$:
\begin{align}\label{eq:def_V}
	\begin{aligned}
	V&=\sum_{m,\bar{m} \in \N} \int_{\Xpos} \prod_{i=1}^{m} \rd x_i  \int_{\Xpos} \prod_{j=1}^{\bar{m}} \rd y_j \  v_{m,\bar{m}}(x_1,\dots,x_m;y_1,\dots,y_{\bar{m}}) \prod_{j=1}^{\bar{m}} a^*(y_j)  \prod_{i=1}^{m}a(x_i)\\
	&= \sum_{m,\bar{m} \in \N} \int_{\Xpos} \rd^m x  \int_{\Xpos} \rd^{\bar{m}} y \ v_{m,\bar{m}}({x};{y}) \left(a^*\right)^{\bar{m}}({y}) a^m({x}).
	\end{aligned}
\end{align}
From now on, for the sake of conciseness, we will always use short-hand notation similar to that in the second line of \eqref{eq:def_V}. 
A typical case to model would be a perturbed grand-canonical ensemble. This is achieved by setting $Q=A$ and making $H$ self-adjoint by setting $B=0$ and $V$ to be hermitian. Then $V$ acts as perturbation introduced at time $t=0$. 
Instead of focussing on the expression for $\ex{O}_{\Time}$ directly, we will establish a functional integral for the generating functional
\begin{align}\label{eq:gen_fct}
	Z(c^-,c^+,\Time)  = \Tr(\E^{iH^{\dag}\Time}\E^{(c^+,a^*)_{\Xpos}}\E^{(c^-,a)_{\Xpos}}\E^{-iH\Time}\rho_0)
\end{align}
of the reduced density matrices
\begin{align}\label{eq:red_dens}
	\begin{aligned}
	\gamma_{m,\bar{m}}(x_1, \dots,x_m,y_1, \dots y_{\bar{m}},\Time)  &=\ex{\prod_{i=1}^ma ^{*}(x_i)\prod_{j=\bar{m}}^1a(y_j) }_{\Time}\\
	&= \frac{1}{Z_0} \Tr \left( \E^{iH^{\dag}\Time} \prod_{i=1}^ma ^{*}(x_i)\prod_{j=\bar{m}}^1a(y_j) \E^{-iH\Time} \rho_0 \right)\\
	&=\frac{\delta^m}{{\delta c_{x_1}^+\dots \delta c_{x_m}^+}} \frac{\delta^{\bar{m}}}{\delta c_{y_{\bar{m}}}^- \dots \delta c_{y_1}^-} \frac{Z(c^-,c^+)}{Z(0,0)} \Big|_{c^+=0,c^-=0},
	\end{aligned}
\end{align}
where $c^-(x),c^+(x) \in \C\{c^+(x),c^-(x)\}_{x \in \Xpos}$ are Grassmann-valued source fields\footnote{It is easy to construct $c^\pm$ by using the isomorphy of $\cF \wedge \cF \wedge \cF$ 
to a larger Grassmann algebra in which the $c^\pm$ appear as generators that correspond to the first two factors in the exterior product} 
that anticommute with $a$ and $a^*$: 
\beq
\{c^{\pm}(x)\;,\;a^{*}(y)\} = \{c^{\pm}(x)\;,\;a(y)\} = 0 \;.
\neeq
In \eqref{eq:red_dens}, the ``evaluation at $c^+=0$ and $c^-=0$'' means that the Grassmann element is projected to its ``body'', i.e.\ the constant term in the polynomial $f(c^+,c^-)$.
The expectation value of an arbitrary operator $O \in \mathcal{L}(\cF,\cF)$ can then be written as a linear combination of the reduced density matrices, and as such a convergent functional integral representation of $Z$, implies the same for $\ex{O}_{\Time}$. 

To establish quantitative bounds on the evolution of the underlying fermionic systems, we will inspect the connected parts of the reduced density matrices (also called truncated expectation values or cumulants)  $\gammaT_{m,\bar{m}}$, which are
\begin{align}\label{eq:tr_exp_values}
 \gammaT_{m,\bar{m}} (x_1,\dots,x_m,y_1,\dots ,y_{\bar{m}},\Time) = \frac{\delta^m}{{\delta c_{x_1}^+\dots \delta c_{x_m}^+}} \frac{\delta^{\bar{m}}}{\delta c_{y_{\bar{m}}}^- \dots \delta c_{y_1}^-} F(c^-,c^+)\Big|_{c^+,c^-=0},
\end{align}
with the cumulant generating functional 
\begin{align}\label{eq:def_F}
	F(c^-,c^+,\Time)  = \log(\frac{Z(c^-,c^+,\Time) }{Z_0})
\end{align}
which can be viewed formally as an analogue of a free energy. 

In the following we will always assume that $V$ is even, that is, $v_{m,\bar m} = 0$ if $m+\bar{m}$ is an odd integer. As the constant terms in the Hamiltonian $H$ cancel out in \eqref{eq:gen_fct}, we may also assume w.l.o.g. that $v_{0;0}=0$. The $v_{m,\bar m}$ are taken to be antisymmetric under permutation of the $x$ and the $y$ variables. 

Under the additional assumption of $U(1)$-invariance of $V$ under $a(x) \mapsto \E^{\I \alpha} a(x)$,  our entire system is $U(1)-invariant$, and then $m\ne\bar{m}$ implies $\gammaT_{m,\bar{m}} = 0$. We will, however, not impose this symmetry in this paper. We also do not need translation invariance. 

We will be interested in weakly interacting systems, where the data in $H_0$ determine essential properties. In Section \ref{sec:Keldysh}, we define covariance $C$, its determinant constant $\delta_C$, and its decay constants $\alpha_C$ and $\tilde\alpha_C$. All of these constants a priori depend on total time $\Time$, system volume $|\Xpos|$, the lattice spacing $\varepsilon^{\frac{1}{d}}$, and the one-particle operators $A,B$ and $Q$. We do, however, give bounds uniform in $\Xpos$ for these constants, and we prove the following theorem. 

The interaction defines a sequence of functions $(v_{m,\bar m})_{m,\bar m: m+\bar m > 0}$. For $h > 0$, we define a norm on the space of such sequences by 
\beq\label{eq:oneinf}
\eunorm{V}_{h}
=
\sum_{m,\bar m \ge 0 \atop m+\bar m > 0} \abs{v_{m,\bar m}}_{1,\infty,\Xpos} \; h^{m+\bar m}
\eeq
where $\abs{v_{m,\bar m}}_{1,\infty,\Xpos} = \max \{ \sup_{x\in \Xpos} \lambda (x) , \sup_{y\in \Xpos} \tilde \lambda (y)\}$, with
\beq
\lambda(x) = \int_{\Xpos^{m-1}} \rd^{m-1} x'  \int_{\Xpos} \rd^{\bar{m}} y \ \abs{v_{m,\bar{m}}({x,x'};{y})} \; ,
\qquad
\tilde \lambda(y) = \int_{\Xpos^{m}} \rd^{m} x'  \int_{\Xpos^{\bar m -1}} \rd^{\bar{m}-1} y' \ \abs{v_{m,\bar{m}}({x'};{y,y'})} 
\eeq
In case of a translation invariant system, $\lambda(x)$ does not depend on $x$ and $\lambda (y)$ does not depend on $y$ any more, so then  $\abs{v_{m,\bar m}}_{1,\infty,\Xpos} = |\Xpos|^{-1} \; \norm{v_{m,\bar m}}_{1}$ (with all summations in the $1$-norm taken over $\Xpos$).

We use the same norms for the sequence of truncated reduced density matrices $\gammaT_{m,\bar m}$. We also denote the $\gammaT $ for $V=0$ by $\gammaTz$.

\begin{theorem}\label{th:main_theorem}
Let $V$ be an even interaction and $\eunorm{\cdot}_h$ be defined as in  \eqref{eq:oneinf}.
Let $C$ be the Keldysh covariance associated to $H_0$ and $\rho_0$, and $\alpha_C$ and $\tilde \alpha_C$ be its decay constants (see Definition \ref{def:decay_const}), and $\delta_C$ be its determinant constant (see Lemma \ref{lem:det_const}),  
and $\omega_C=2 \alpha_C \delta_C^{-2}$.  
If $\omega_C \eunorm{V}_{3\delta_C,{\Xpos}} \leq \frac{1}{2}$, then 
\begin{align}\label{abfi}
	 \abs{\gammaT_{m,\bar{m}}-\gammaTz_{m,\bar{m}}}_{1,\infty, {\Xpos}}  \leq 2 m! \bar{m}! \tilde{\alpha}_C^{m+\bar{m}-1} \alpha_C \delta_C^{-m-\bar{m}} \eunorm{V}_{3\delta_C,{\Xpos}}.
\end{align}
For $m\neq 1$ or $\bar{m}\neq1$, the free truncated expectation values $\gammaTz_{m,\bar{m}}$ vanish, and \eqref{abfi} becomes
\begin{align}
	\abs{\gammaT_{m,\bar{m}}}_{1,\infty,{\Xpos}}\leq 2 m! \bar{m}! \tilde{\alpha}_C^{m+\bar{m}-1} \alpha_C \delta^{-m-\bar{m}} \eunorm{V}_{3\delta_C,{\Xpos}} .
\end{align}
The same bounds hold for truncated expectation values of unordered monomials for all $m + \bar{m} \neq 2$.
\end{theorem}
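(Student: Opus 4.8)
The plan is to start from the convergent functional-integral representation of the generating functional established in Section~\ref{sec:Keldysh}, in the schematic form $Z(c^-,c^+,\Time)/Z_0 = \int \dd\mu_C(\psq,\ps)\,\E^{-V(\psq,\ps)}\,\E^{(\text{source couplings})}$, where $V$ is lifted to the two branches of the Keldysh contour, $C$ is the Keldysh covariance, and the sources $c^\pm$ couple linearly to the fields. To access the cumulants $\gammaT_{m,\bar m}$ one expands $F = \log(Z/Z_0)$ into connected contributions, and the natural device is a tree expansion: interpolating $C$ between the interaction vertices and applying a forest/interpolation formula represents each $\gammaT_{m,\bar m}$ as a sum over $n \ge 0$ interaction vertices joined by a spanning tree of covariance lines, with the $m+\bar m$ external points attached through the derivatives $\delta/\delta c^\pm$ as further covariance lines, and with all remaining fields integrated against an interpolated covariance so as to produce a single determinant.

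The estimate then rests on three ingredients. First, the determinant over the loop fields is controlled by the Pedra--Salmhofer determinant bound \cite{Pedra2008} in its optimized form \cite{dRS} (Lemma~\ref{lem:det_const}); this is of Gram--Hadamard type, assigns a factor $\delta_C$ per field entering the determinant, and relies crucially on the positivity of the interpolated weighting matrices being preserved along the interpolation. Packaging this per-field weight $\delta_C$ with the combinatorial count of whether a field is used as a tree line, a loop line or an external leg is exactly what dictates evaluating $\eunorm{V}_h$ at $h = 3\delta_C$. Second, each internal tree line contributes an $L^1$--$L^\infty$ norm of $C$, bounded by a decay constant of Definition~\ref{def:decay_const}. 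Third, the sum over the tree structures on $n$ vertices, whose number grows factorially, is absorbed by the $1/n!$ of the exponential together with the vertex symmetry factors, so that no $n!$ survives.

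Extracting the line structure then produces the prefactor. Each internal tree line joining an additional vertex to the growing structure carries one covariance decay $\alpha_C$ while its two fields leave the $\delta_C$-weighting, which yields the per-additional-vertex factor $\omega_C = 2\alpha_C\delta_C^{-2}$, the $2$ absorbing the orientation combinatorics. The $m+\bar m$ external points are attached by covariance lines; measuring $\gammaT_{m,\bar m}$ in $\abs{\cdot}_{1,\infty,\Xpos}$ fixes the one supremised external variable and $L^1$-integrates the other $m+\bar m-1$, giving $\tilde\alpha_C^{\,m+\bar m-1}$ together with a remaining $\alpha_C$ from the supremised line, while the vertex-side field of every external line removes one $\delta_C$-weight, together contributing $\delta_C^{-m-\bar m}$; the factor $m!\,\bar m!$ counts the assignment of external legs. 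Since $\gammaT_{m,\bar m}-\gammaTz_{m,\bar m}$ contains at least one interaction vertex (the $V=0$ term is precisely $\gammaTz_{m,\bar m}$), one factors out a single $\eunorm{V}_{3\delta_C,\Xpos}$ from the minimal vertex and bounds the residual sum over additional vertices by $\sum_{n\ge 0}(\omega_C\eunorm{V}_{3\delta_C,\Xpos})^n \le 2$, using $\omega_C\eunorm{V}_{3\delta_C,\Xpos}\le\tfrac12$; this gives the overall factor $2$ in \eqref{abfi}. For $(m,\bar m)\ne(1,1)$ the free cumulant $\gammaTz_{m,\bar m}$ vanishes, because $H_0$ and $\rho_0$ are number-conserving, hence the free Keldysh system is Gaussian and $U(1)$-invariant and only the $m=\bar m=1$ propagator survives. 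The passage to unordered monomials with $m+\bar m\ne 2$ follows by reducing to normal order via the anticommutation relations: the correction terms lower $m+\bar m$ by two and fall under the same tree estimate, the exclusion $m+\bar m=2$ removing exactly the case in which a correction is a $c$-number feeding the connected two-point function at order $V^0$.

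The main obstacle is not any isolated inequality but the simultaneous bookkeeping of the three weight systems---the per-line decay constants $\alpha_C,\tilde\alpha_C$, the per-field determinant weight $\delta_C$, and the external-leg combinatorics---so that the powers of $\alpha_C,\tilde\alpha_C,\delta_C$ and the factorials $m!\,\bar m!$ come out precisely as in \eqref{abfi}. In particular one must verify, as flagged in the introduction, that the positivity of the interpolated covariance required by the determinant bound is preserved throughout the entire forest interpolation and uniformly in the volume $\abs{\Xpos}$, so that $\delta_C,\alpha_C,\tilde\alpha_C$ can be treated as volume-independent constants.
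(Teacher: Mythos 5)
Your proposal follows essentially the same route as the paper's proof: the paper likewise combines the convergent Keldysh functional integral with the tree expansion of the effective action $\W$ under the Pedra--Salmhofer determinant bound (invoked via \cite[Theorem~1]{Salmhofer:2009wm} with Gram constants replaced by determinant bounds, yielding $\abs{W_{m,\bar m}}_{1,\infty,\X}\leq 2\,\delta_C^{-m-\bar m}\eunorm{V}_{3\delta_C,\Xpos}$), and with the observation that the source derivatives attach external covariance legs to $W_{m,\bar m}$, whose $\abs{\cdot}_{1,\infty,\Xpos}$ norm then chains to $\alpha_C\tilde\alpha_C^{\,m+\bar m-1}$, exactly as you describe. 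The only differences are presentational: the paper cites the tree-expansion convergence and the $W_{m,\bar m}$ bound as a black box rather than re-deriving them, and the volume-uniformity of $\delta_C,\alpha_C,\tilde\alpha_C$ that you flag as an obstacle is not actually needed for this theorem (it is a separate issue, treated in Section~\ref{sec:example}).
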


Theorem \ref{th:main_theorem} is proven in Section 3, using the convergent Keldysh functional integral derived there. It implies that if the determinant and decay constants are uniform in $\Xpos$, the truncated reduced density matrices (cumulants) are absolutely summable, if one of the external variables is fixed (thus ``have $\ell^1$-decay'').

\section{The Keldysh Functional Integral} \label{sec:Keldysh}

The derivation of the functional integral of $Z$ is similar to the derivation undertaken in \cite[Appendix]{Salmhofer:2009wm}. We will only outline the main strategy here. All steps left out are straightforward although often a bit lengthy. 

\subsection{Construction of the Functional Integral}
We use the variant of the Lie product formula given in \cite[Lemma~3]{Salmhofer:2009wm} to discretize the time evolution in \eqref{eq:gen_fct}. Defining
\begin{align}
C_N = \E^{-i\frac{\Time}{N}H_0}, &\quad D_N = (1-i\frac{\Time}{N}V),
\end{align}
this implies $\E^{-iH\Time} = \lim_{n\rightarrow \infty}\left((C_N D_N)^N\right)$ and thus
\begin{align}\label{eq:first_def_Z_N}
	\begin{aligned}
	Z(c^-,c^+,\Time)   =& \lim_{N\rightarrow \infty} \Tr((C_N^{\dag} D_N^{\dag})^N \E^{(c^+,a^*)_{\Xpos}}\E^{(c^-,a)_{\Xpos}} (C_N D_N)^N\rho_0)\\
	 =&\lim_{N \rightarrow \infty} Z_N(c^-,c^+).
	\end{aligned}
\end{align}
$Z_N$ may then be understood as the time-discretized version of the generating function $Z$. In the process, we lost the exponential character of the time evolution in $V$. While this helps by making the replacement with Grassmann variables simpler, we will later on need to "re-exponentiate" all terms containing the potential $V$.

Utilising the isomorphism between $\cF$ and the Grassmann algebra $\{\bar{\psi}(x)\}_{x \in \Xpos}$, \cite[Lemma~B.18]{MSbook} allows us to express traces of operators $A\in \mathcal{L}(\cF,\cF)$ via Grassmann integrals. Explicitly,
\begin{align}
	\begin{aligned}
		\Tr(A) = \int D_{\Xpos}(\bar{\psi},\psi) \cG(A)(-\bar{\psi},\psi)\E^{-(\bar{\psi},\psi)_{\Xpos}} = \int D_{\Xpos}(\bar{\psi},\psi) \cG(A)(\bar{\psi},-\psi)\E^{-(\bar{\psi},\psi)_{\Xpos}}.
	\end{aligned}
\end{align}
Here $\int D_{\Xpos}(\bar{\psi},\psi) = \varepsilon^{-|\Xpos|} \int D\bar{\psi} D\psi = \varepsilon^{-|\Xpos|} \prod_{x \in \Xpos} \int d\bar{\psi}(x) d\psi(x)$ is used for notational simplicity. The Grassman symbol is defined as $\cG(A)=\E^{(\bar{\psi},\psi)_{\Xpos}}\Omega(A)(\bar{\psi},\psi)$, where $\Omega(A)(a^*,a)$ refers to the normal ordered form of $A$. The Grassmann symbol of an operator product is
\begin{align}
		\cG(AB)(\bar{\psi},\psi) = \int D_{\Xpos}(\bar{\eta},\eta) \cG(A)(\bar{\psi},\eta) \E^{-(\bar{\eta},\eta)_{\Xpos}} \cG(B)(\bar{\eta},\psi).
\end{align}
Making the appropriate replacements in \eqref{eq:first_def_Z_N}, writing 
\beq\label{UNdef}
U_N = \E^{i(A+iB)\frac{\Time}{N}}
\eeq
and using the following Grassmann symbols
\beq
\begin{split}
\cG(C_N^\dag D_N^\dag)(\bar{\psi},\psi) 
&= (1+i \frac{\Time}{N}V((U_N^\dag)^{\top} \bar{\psi},\psi))\E^{(\bar{\psi},U_N^\dag \psi)_{\Xpos}} \\
\cG(\rho_0)(\bar{\psi},\psi) 
&= 
\E^{(\bar{\psi},\E^{-\beta Q} \psi)_{\Xpos}}\\
\cG(C_N D_N)(\bar{\psi},\psi) 
&= 
(1-i \frac{\Time}{N}V(U_N^{\top} \bar{\psi},\psi))\E^{(\bar{\psi},U_N \psi)_{\Xpos}}\\
\cG \left(\E^{(c^+,a^*)_{\Xpos}} \E^{(c^-,a)_{\Xpos}}\right)(\bar{\psi},\psi) 
&= 
\E^{(c^+,\bar{\psi})_{\Xpos}} \E^{(c^-,\psi)_{\Xpos}} \E^{(\bar{\psi},\psi)_{\Xpos}}\;,
\end{split}
\eeq
one may thus show that
\begin{align}\label{eq:explicit_form}
\begin{aligned}
	Z_N(c^-,c^+) = &\int \prod_{\ell=1}^{N+1} \left[D_{{\Xpos}}(\bar{\psi}^-_\ell,\psi^-_\ell) \right] \prod_{\ell=2}^{N+1}\left[ (1+i \frac{\Time}{N}V((U_N^\dag)^{\top} \bar{\psi}_{\ell-1}^-,\psi^-_\ell))\E^{(\bar{\psi}^-_{\ell-1},U_N^\dag\psi^-_\ell)_{\Xpos}}\E^{-(\bar{\psi}^-_\ell,\psi^-_\ell)_{\Xpos}} \right]\\
	&\times  \int \prod_{k=1}^{N+1} \left[ D_{\Xpos}(\bar{\psi}_k^+,\psi_k^+)  \right] \E^{(c^+,\bar{\psi}^-_{N+1})_{\Xpos}} \E^{(c^-,\psi^+_{N+1})_{\Xpos}} \E^{(\bar{\psi}^-_{N+1},\psi^+_{N+1})_{\Xpos}} \E^{-(\bar{\psi}^+_{N+1},\psi^+_{N+1})_{\Xpos}}\\
	&\times \prod_{k=1}^N \left[ (1-i \frac{\Time}{N}V(U_N^{\top} \bar{\psi}^+_{k+1},\psi^+_k))\E^{(\bar{\psi}^+_{k+1},    U_N \psi^+_k)_{\Xpos}} \E^{-(\bar{\psi}^+_k,\psi^+_k)_{\Xpos}} \right] \E^{-(\bar{\psi}^+_{1},\Gamma \psi^-_1)_{\Xpos}}.
\end{aligned}
\end{align}
Here, in order to connect back to physics, we have labelled our Grassmann variables according to their respective times and appearance on the forward or backward path of the Keldysh contour. An illustration of this is given in Figure \ref{fig:sketch}.\\
\begin{figure}[h!]
\begin{center}
	\begin{tikzpicture}
		\draw[->][very thick] (-2,0) -- (12,0) node[anchor = west]{$t$};
		\draw[thick] (0,1) -- (10,1);
		\draw[thick] (0,-1) -- (10,-1);
		\draw[thick] (10,1) arc (90:-90:0.7cm and 1cm);
		\draw[thick] (0,1) arc (90:270:0.7cm and 1cm);
		\draw[thick][->] (4.99999,-1) -- (5,-1);
		\draw[thick][<-] (4.99999,1) -- (5,1);
		\filldraw (-0.7,0) circle (0.05) node[anchor=south east]{$\rho_0$};
		\filldraw (0.75,1) circle (0.05) node[anchor=north]{$\left( C_N D_N\right)^{\dag}$};
		\filldraw (2.25,1) circle (0.05) node[anchor=north]{$\left( C_N D_N\right)^{\dag}$};
		\filldraw (9.25,1) circle (0.05) node[anchor=north]{$\left( C_N D_N\right)^{\dag}$};
		\filldraw (7.75,1) circle (0.05) node[anchor=north]{$\left( C_N D_N\right)^{\dag}$};
		
		\filldraw (0.75,-1) circle (0.05) node[anchor=south]{$C_N D_N$};
		\filldraw (2.25,-1) circle (0.05) node[anchor=south]{$C_N D_N$};
		\filldraw (9.25,-1) circle (0.05) node[anchor=south]{$ C_N D_N$};
		\filldraw (7.75,-1) circle (0.05) node[anchor=south]{$C_N D_N$};
		
		\draw[thick] (0,-1) -- (0,-1.2) node[anchor=north]{$t_1^+ \leftrightarrow \bar{\psi}_1^+,\psi_1^+$};
		\draw[thick] (1.5,-1) -- (1.5,-1.2) node[anchor=north]{$t_2^+$};
		\draw[thick] (3,-1) -- (3,-1.2) node[anchor=north]{$t_3^+$};
		\draw[thick] (7,-1) -- (7,-1.2) node[anchor=north]{$t_{N-1}^+$};
		\draw[thick] (8.5,-1) -- (8.5,-1.2) node[anchor=north]{$t_{N}^+$};
		\draw[thick] (10,-1) -- (10,-1.2) node[anchor=north]{$t_{N+1}^+$};
		\filldraw (10.7,0) circle (0.05) node[anchor=south east]{$O$};
		\draw[thick] (0,1) -- (0,1.2) node[anchor=south]{$t_1^- \leftrightarrow \bar{\psi}_1^-,\psi_{1}^-$};
		\draw[thick] (1.5,1) -- (1.5,1.2) node[anchor=south]{$t_2^-$};
		\draw[thick] (3,1) -- (3,1.2) node[anchor=south]{$t_3^-$};
		\draw[thick] (7,1) -- (7,1.2) node[anchor=south]{$t_{N-1}^-$};
		\draw[thick] (8.5,1) -- (8.5,1.2) node[anchor=south]{$t_{N}^-$};
		\draw[thick] (10,1) -- (10,1.2) node[anchor=south]{$t_{N+1}^-$};
	\end{tikzpicture}	
	\end{center}
	\caption{A sketch of the operators inside the trace and their associated times. We start with our initial distribution $\rho_0$, go $N$ discrete time steps forward, insert $O$ - represented by the Grassmann sources $c^+,c^-$ in the actual calculation - and go $N$ discrete steps backwards to $\rho_0$. The associated Grassmann symbols take the $\bar{\psi}$ from the next and the $\psi$ from the previous time index along the Keldysh contour. The contour reflects the cyclicity of the trace.}	
	\label{fig:sketch}
\end{figure}
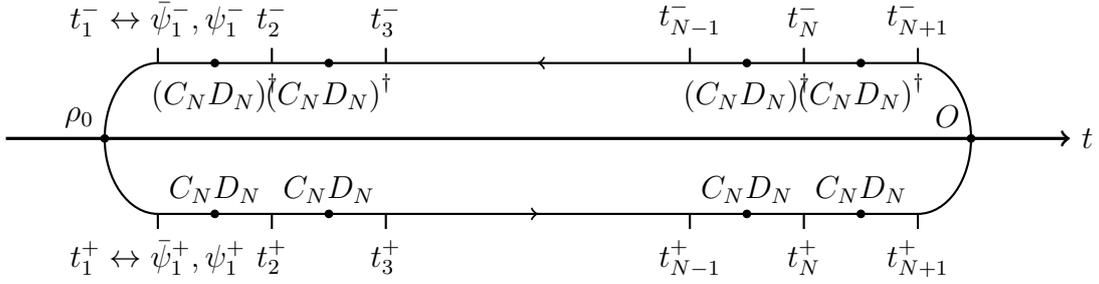

For $N\in \N$ let $\X^{(N)} = \{+,-\} \times \{1,\dots ,N+1\}\times {\Xpos} $. For  $(\rho,m);(\sigma,n)\in \{+,- \} \times \{1, \dots, N+1\}$ we define an $\X^{(N)} \times \X^{(N)}$ matrix (with operator-valued entries)
\begin{align}\label{eq:pre_mat}
\begin{aligned}
	\left(G_{\rho \sigma}^{(N)}\right)^{-1}_{m n} = 
	&\delta_{\rho,\sigma} \delta_{m,n} 
	- \delta_{\rho,-} \delta_{-,\sigma} \delta_{m,n-1} U_N^\dag- \delta_{\rho,+} \delta_{+,\sigma} \delta_{m-1,n} U_N \\
	& - \delta_{\rho,-} \delta_{+,\sigma} \delta_{m,N+1} \delta_{n,N+1} + \delta_{\rho,+} \delta_{-,\sigma} \delta_{m,1} \delta_{n,1} \E^{-\beta Q} \;.
\end{aligned}
\end{align}
This enables us to condense \eqref{eq:explicit_form} to
\begin{align}\label{eq:gen_fnct_G_inverse}
\begin{aligned}
	Z_N(c^-,c^+)=& \int \prod_{\ell=1}^{N+1} D_{{\Xpos}}(\bar{\psi}^-_\ell,\psi^-_\ell)  \int \prod_{k=1}^{N+1} D_{\Xpos}(\bar{\psi}_k^+,\psi_k^+) \E^{-(\bar{\psi},\left(G^{(N)}\right)^{-1} \psi)_{\X^{(N)}}} \E^{(c^+,\bar{\psi}^-_{N+1})_{\Xpos}} \E^{(c^-,\psi^+_{N+1})_{\Xpos}} \\
	&\times \prod_{\ell=2}^{N+1}(1+i \frac{\Time}{N}V((U_N^\dag)^{\top} \bar{\psi}^-_{\ell-1},\psi^-_\ell)) \prod_{k=1}^N (1-i \frac{\Time}{N}V(U_N^{\top} \bar{\psi}^+_{k+1},\psi^+_k)).
\end{aligned}
\end{align}
That $\left(G^{(N)}\right)^{-1}$ can be viewed as an actual inverse follows, as 
\beq
\det(\left(G^{(N)}\right)^{-1}) = \det(1+ \E^{-\beta Q}\E^{iA_+ \Time} \E^{-i A_- \Time})
\eeq
is nonvanishing. The scalar product involving $\left(G^{(N)}\right)^{-1}$ is short for
\begin{align}
\begin{aligned}
	(\bar{\psi},\left(G^{(N)}\right)^{-1} \psi)_{\X^{(N)}} &= \sum_{\rho ,\sigma} \sum_{m,n=1}^{N+1} (\bar{\psi}^{\rho}_m, \left(G_{\rho \sigma}^{(N)}\right)^{-1}_{m n}  \psi^{\sigma}_n)_{\Xpos}.
	\end{aligned}
\end{align}
We wish to express \eqref{eq:gen_fnct_G_inverse} via Grassmann Gaussian convolution to the covariance $G^{(N)}$, defined by
\begin{align}
\begin{aligned}
\mu_{G^{(N)}} \ast f(\bar{\eta},\eta) &= \gint{G^{(N)}} f(\bar{\psi}+\bar{\eta},\psi+\eta)\\
 &= \det(G^{(N)}) \int D_{\Xpos}(\bar{\psi},\psi) \ \E^{-(\bar{\psi}, \left(G^{(N)} \right)^{(-1)} \psi)_{\X^{(N)}}}f(\bar{\psi}+\bar{\eta},\psi+\eta).
 \end{aligned}
\end{align}
Thus, in order to express everything as a Grassmann Gaussian convolution, we should first establish the associated time discretized covariance $G^{(N)}$. The calculations for this are similar to the ones given in the appendices of \cite{Salmhofer:2009wm,Salmhofer:2020}. It is easiest to state $G^{(N)}$ in terms of its continuum limit $C$. 

\begin{definition}
Let $A$ and $B$ be the one-particle operators defining $H_0$ (see \eqref{H0AB}) and $A_\pm = A \pm \I B$, and $f_\beta (E) = (1+ \E^{\beta E})^{-1}$. Let $\X=\left(\{+,-\} \times [0,\Time ] \times {\Xpos}\right)$. The general Keldysh covariance $C$ is the $\X \times \X$-matrix given in the block form
	\begin{align}\label{eq:C}
	&C= \begin{pmatrix}
		C_{++} & C_{+-} \\
		C_{-+} & C_{--}
	\end{pmatrix}.
\end{align}
where each matrix element is a linear operator on the one-particle Hilbert space that depends on $s,s' \in [0,\Time ]$ as follows:
\beq\label{eq:general_covariance}
\begin{split}
C_{-+}(t,t')
&= 
\E^{iA_+(\Time-t)}\E^{-iA_- \Time} f_{-\beta} \E^{iA_-t'}, 
\\
C_{+-}(t,t') 
&= 
-\E^{-iA_-t} f_{-\beta} \E^{-\beta Q} \E^{iA_+t'},
\\
C_{++}(t,t') 
&= 
1_{t \geq t'} \E^{-iA_-t} f_{-\beta}  \E^{iA_-t'} +1_{t<t'} - \E^{-iA_-t} f_{\beta} \E^{iA_-t'},
\\
C_{--}(t,t') 
&= 
1_{t \leq t'} \E^{iA_+(\Time-t)} \E^{-iA_-\Time} f_{-\beta}  \E^{iA_- \Time} \E^{iA_+(\Time-t')} +1_{t>t'} \E^{iA_+ (\Time-t)} \E^{-i A_- \Time} f_{-\beta} \E^{-\beta Q} \E^{iA_+t'},
\end{split}
\eeq
where for notational brevity we used the indicator function $1_\mathcal{A}$ for some set $\mathcal{A}$ and in analogy to the normal Fermi distribution we wrote 
\begin{align}
f_{-\beta} =\left(1+\E^{-\beta Q} \E^{iA_+ \Time}\E^{-iA_- \Time} \right)^{-1}
\quad \quad f_{\beta} = \left(1+\E^{-\beta Q} \E^{iA_+ \Time}\E^{-iA_- \Time} \right)^{-1} \E^{-\beta Q} \E^{iA_+ \Time}\E^{-i A_- \Time}.
\end{align}
\end{definition}

We note in passing that one could also have arranged the operators in the products so as to put the Fermi functions between the operators $\E^{\pm \I A_\pm}$; this of course leads to the same results for the covariance. 

\begin{lemma}\label{lem:det_const}
For $N\in \N$ and defining $t_m = \frac{\Time}{N}(m-1)$ the time discretized covariance is
\begin{align}\label{eq:connection_G_C}
G^{(N)}\left((\rho,m,x),(\sigma,n,y)\right) = C\left((\rho,t_m,x),(\sigma,t_n,y)\right).
\end{align}
Let $B_1^{(n)}$ denote the closed unit ball on $\C^{n}$, with usual scalar product. Then for $n \in \N$,  and $X_1, \dots , X_n, Y_1, \dots , Y_n \in \X$, there is a finite determinant bound $\delta_C$ (see \cite[Section~1]{Pedra2008} for more information), such that 
	\begin{align}
		\sup_{v_1, \dots , v_n, q_1, \dots , q_N \in B_1^{(n)}} \abs{\det \left[\left( \bracket{v_i}{q_j} C(X_i, Y_j) \right)_{1\leq i,j \leq n} \right]} \leq \delta_C^{2n}.
	\end{align}
By \eqref{eq:connection_G_C} $\delta_C$ is a determinant bound for all time discretized covariances as well.
\end{lemma}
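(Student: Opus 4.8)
The statement bundles two claims, and I would take them in order. To establish the identification \eqref{eq:connection_G_C}, I would verify directly that the operator $C$ of the preceding definition, sampled at the discrete times $t_m,t_n$, inverts the banded matrix $(G^{(N)})^{-1}$ of \eqref{eq:pre_mat}. Since $(G^{(N)})^{-1}$ is sparse --- a shift by $U_N$ (resp.\ $U_N^\dag$) along each branch, plus two corner terms that couple the branches and insert $\rho_0$ --- the equation $(G^{(N)})^{-1}\,C|_{\mathrm{discrete}}=\id$ collapses to a short (if tedious) system of difference equations in the time index, of exactly the type solved in the appendices of \cite{Salmhofer:2009wm,Salmhofer:2020}. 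The bulk equations fix the evolution factors $\E^{\pm\I A_\pm t}$, and the two corner terms reproduce the boundary conditions coming from cyclicity of the trace and from $\rho_0$, which is what pins down the Fermi operators $f_{\pm\beta}$. Once \eqref{eq:connection_G_C} is in hand, every determinant bound for $C$ is automatically one for all $G^{(N)}$, so it remains to bound $C$.

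For the determinant bound the plan is the Gram--Hadamard inequality, and the crux is to exhibit a Hilbert space $\mathcal{K}$ together with two families $\{\phi_X\}_{X\in\X}$ and $\{\psi_Y\}_{Y\in\X}$ in $\mathcal{K}$ such that $C(X,Y)=\langle\phi_X,\psi_Y\rangle_{\mathcal{K}}$ for all $X,Y\in\X$. Granting such a representation, the matrix entry $\bracket{v_i}{q_j}\,C(X_i,Y_j)$ is the inner product of $v_i\otimes\phi_{X_i}$ and $q_j\otimes\psi_{Y_j}$ in $\C^{n}\otimes\mathcal{K}$, whence Gram's inequality yields
\[
\abs{\det\!\big[\bracket{v_i}{q_j}\,C(X_i,Y_j)\big]}
\le \prod_{i}\norm{v_i\otimes\phi_{X_i}}\,\prod_{j}\norm{q_j\otimes\psi_{Y_j}}
\le \prod_{i}\norm{\phi_{X_i}}\,\prod_{j}\norm{\psi_{Y_j}},
\]
using $\norm{v_i},\norm{q_j}\le 1$ for vectors in $B_1^{(n)}$. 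Setting $\delta_C=\max\{\sup_{X}\norm{\phi_X},\sup_{Y}\norm{\psi_Y}\}$ then gives the claimed $\delta_C^{2n}$, and finiteness of $\delta_C$ is precisely uniform boundedness of the two vector families.

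Building the representation is the real work. I would read $C$ as the contour--ordered two--point function of the quasi--free ($V=0$) evolution generated by $A_\pm$ on the state $\rho_0$, and construct $\phi,\psi$ block by block, adapting the equilibrium construction of \cite{Salmhofer:2009wm} and the determinant--bound machinery of \cite{Pedra2008}. The off--diagonal blocks $C_{+-},C_{-+}$, and each of the two chronological pieces of $C_{++},C_{--}$, have the shape (exponential of $\pm\I A_\pm$)$\,\cdot\,$(a bounded core $f_{\pm\beta}$ or $\E^{-\beta Q}$)$\,\cdot\,$(exponential of $\pm\I A_\pm$); splitting the core into two operator factors and attaching the left exponential to $\phi$ and the right to $\psi$ produces the desired factorization, with the branch index $\{+,-\}$, the mode index in $\Xpos$, and the two terms of each diagonal block carried by a finite orthogonal sum of copies of $\cH$ together with an auxiliary factor that realizes the step functions $1_{t\ge t'}$ as inner products (the same device as in the equilibrium time--ordered case). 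Crucially $\norm{\E^{\I A_+ s}}\le 1$ and $\norm{\E^{-\I A_- s}}\le 1$ for $s\in[0,\Time]$, because $B\ge 0$ makes the Hermitian part of $\I A_+$ and of $-\I A_-$ equal to $-B\le 0$; the remaining factors are bounded operators on the finite--dimensional $\cH$. Hence $\sup_X\norm{\phi_X}$ and $\sup_Y\norm{\psi_Y}$ are finite, which is all that is needed here.

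The main obstacle is the non--self--adjointness introduced by $B\ne 0$. In the Hermitian case $B=0$ one has $A_+=A_-$ and $f_{\pm\beta}$ positive, so $C$ admits the usual \emph{symmetric} splitting through square roots $f_{\pm\beta}^{1/2}$ and the two families coincide; with $B\ne 0$ the operators $A_\pm$ are non--normal and $f_{\pm\beta}$ is neither self--adjoint nor positive, so no square--root splitting is available and $\phi,\psi$ genuinely differ. One must therefore settle for a bounded, asymmetric factorization --- boundedness of $f_{\pm\beta}$ (finite dimension, invertibility from the nonvanishing determinant) and the contractivity of the exponentials make this possible, but the resulting $\delta_C$ is crude (it may grow in $\Time$ and in the volume). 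Obtaining the sharp, volume--uniform constants is a separate matter, deferred to Lemma \ref{lem:determinant_bound_and_decay_const}. The bound for every $G^{(N)}$ then follows from \eqref{eq:connection_G_C}.
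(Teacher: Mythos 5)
Your proposal follows essentially the same route as the paper: the identification \eqref{eq:connection_G_C} is checked by explicitly inverting the banded matrix \eqref{eq:pre_mat} (as in the appendices of \cite{Salmhofer:2009wm,Salmhofer:2020}), and the determinant bound is obtained by exhibiting a Gram representation $C(X,Y)=\langle\phi_X,\psi_Y\rangle$ and applying the Gram--Hadamard argument, which is precisely the content of \cite[Theorem~1.3]{Pedra2008} that the paper invokes, with the explicit (asymmetric, $B$-dependent) Gram vectors deferred to the proof of Lemma \ref{lem:determinant_bound_and_decay_const}. Your treatment is correct, including the observation that only finiteness of $\sup_X\norm{\phi_X}$ and $\sup_Y\norm{\psi_Y}$ is needed at this stage, so nothing further is required.
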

\begin{proof}
	That $G^{(N)}\left((\rho,m,y),(\sigma,n,y)\right) = C\left((\rho,t_m,y),(\sigma,t_n,y)\right)$ actually corresponds to the time discretized covariance can be checked explicitly. In actuality, one would always derive $G^{(N)}$ by explicitly inverting \eqref{eq:pre_mat} and simply define $C$ as its continuum limit.
	
	 The existence of some finite determinant bound for the covariance given in \eqref{eq:general_covariance} is easily proven by finding a suitable Gram representation and using \cite[Theorem~1.3]{Pedra2008}. An explicit version of this argument is used in the proof of Lemma \ref{lem:determinant_bound_and_decay_const}. It is clear that, by definition, this determinant bound holds for all $G^{(N)}$.
\end{proof}

Using that the factors of $U_N$ in the argument of $V$ of \eqref{eq:gen_fnct_G_inverse} drop out for $N \rightarrow \infty$, and completing the square, we get 
\begin{align}
\begin{aligned}
Z_N(c^-,c^+)=& \det(1+ \E^{-\beta Q}\E^{iA_+ \Time} \E^{-i A_- \Time}) \E^{-(c^-,\left(G_{+-}^{(N)}\right)_{N+1,N+1}c^+)_{\Xpos}}\\
	& \times\mu_{{G}^{(N)}} \ast \nu^{(N)} (\bar{\zeta}^{(N)} , \zeta^{(N)})
	 \end{aligned}
\end{align}
with 
\begin{align}
	\nu^{(N)} = \prod_{k=1}^N (1-i \frac{\Time}{N}V)(\bar{\psi}_{k+1}^+,\psi_{k}^+) \prod_{\ell=2}^{N+1}(1+i \frac{\Time}{N}V^{\dag}(\bar{\psi}_{l-1}^-,\psi_l^-))
\end{align}
and 
\begin{align}
	\bar{\zeta}^{(N)} = \left(G_{\cdot,+}^{(N)}\right)_{\cdot,N+1}^{\top} c^-, \quad \quad \quad \quad \zeta^{(N)}  = -\left(G^{(N)}_{\cdot,-} \right)_{\cdot,N+1}c^+.
\end{align}
The part of the covariance contributing to the reduced density matrices even for zero potential is
\begin{align}\label{eq:equiv}
 \left(G_{+-}^{(N)}\right)_{N+1,N+1} &= -U_N^N \left(1+\E^{-\beta Q} \left(U_N^{\dag}\right)^NU_N^N\right)^{-1} \E^{-\beta Q} \left(U_N^{\dag}\right)^NU_N^N \left(U_N^{-1}\right)^N.
\end{align}
Besides giving an explicit form of $G^{(N)}$ through its continuum limit, Lemma \ref{lem:det_const} states the existence of a decay constant $\delta_C$ applicable to all time discretized covariances. This allows us to re-exponentiate all terms of $\nu^{(N)}$.To see so, we begin by defining
\begin{align}
\V^{(N)} = \frac{\Time}{N}\left(\sum_{k=1}^{N} V(\bar{\psi}^+_{k+1},\psi^+_k) - \sum_{\ell=2}^{N+1} V^{\dag}(\bar{\psi}^-_{\ell-1},\psi^-_\ell) \right).
\end{align}
\begin{lemma}
	Let $\nu^{(N)}$ and $\V^{(N)}$ be defined as above. We define 
\begin{align}
	\Delta = \E^{-i\V^{(N)}}-\nu^{(N)}
\end{align}
Let $q>0$. Given any norm $\abs{\cdot}$ on $\C\{c^+(x),c^-(x)\}_{x \in {\Xpos}}$ and 
\beq
f=\sum_{m,\bar{m}} \int_{\X^{(N)}} \rd^m X  \rd^{\bar{m}} Y f_{m,\bar{m}}(X,Y) \bar{\psi}^{\bar{m}}(Y) \psi^m(X)
\eeq 
with $f_{m,\bar{m}}$ potentially dependent on $c^{\pm}$, we define the submultiplicative norm
\begin{align}\label{eq:Grassmann_Algebra_Norm}
	\mynorm{f}_q = \sum_{m,\bar{m}} \int_{\X^{(N)}} \rd^m X \int_{\X^{(N)}} \rd^{\bar{m}} Y |f_{m,\bar{m}}(X,Y)| q^{m+\bar{m}}.
\end{align}
Then
\begin{align}
	\begin{aligned}
		 \abs{\mu_{G^{(N)}} \ast \Delta(\bar{\zeta},\zeta)} &\leq \mynorm{\Delta}_{\delta_C} \leq 2\frac{T^2}{N} \mynorm{V}_{\delta_C}^2 \E^{2 \Time \mynorm{V}_{\delta_C}} \rightarrow 0,
	\end{aligned}
\end{align}
where the limit $N\rightarrow \infty$ is taken in the end.
\end{lemma}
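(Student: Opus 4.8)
The plan is to prove the two chained inequalities separately. The first is the fermionic Gaussian-convolution (Gram/determinant) bound, which is where the constant $\delta_C$ enters; the second is the new combinatorial content, a telescoping estimate comparing the product $\nu^{(N)}$ with the exponential $\E^{-\I\V^{(N)}}$ slice by slice.

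First I would deal with $\abs{\mu_{G^{(N)}} \ast \Delta(\bar{\zeta},\zeta)} \le \mynorm{\Delta}_{\delta_C}$. Writing the convolution as $\int \rd\mu_{G^{(N)}}(\psq,\ps)\,\Delta(\psq+\bar{\zeta},\ps+\zeta)$ and expanding each monomial of $\Delta$, Wick's theorem represents every fully contracted term as a determinant of entries of $G^{(N)}$, while the uncontracted legs are carried by the external fields $\bar{\zeta},\zeta$. By Lemma~\ref{lem:det_const}, $\delta_C$ is a determinant bound for every $G^{(N)}$, so via the Gram representation of \cite{Pedra2008} each contracted pair costs at most $\delta_C^{2}$ and each external leg at most one factor $\delta_C$. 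Since the weight in $\mynorm{\cdot}_{\delta_C}$ is exactly $q=\delta_C$ per leg, summing over all monomials yields $\mynorm{\Delta}_{\delta_C}$; this is the standard determinant/Gram estimate, which I would simply invoke.

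For the second inequality I would first observe that, because $V$ is even, every factor $V(\cdots)$ and $V^{\dag}(\cdots)$ is an even Grassmann element, and even elements commute. Consequently the exponential of a sum factorizes, so
\[
\E^{-\I\V^{(N)}}=\prod_{k=1}^{N}\E^{-\I\frac{\Time}{N}V(\psq^+_{k+1},\ps^+_k)}\;\prod_{\ell=2}^{N+1}\E^{\I\frac{\Time}{N}V^{\dag}(\psq^-_{\ell-1},\ps^-_\ell)},
\]
which matches the $2N$ factors of $\nu^{(N)}$ one for one: writing the $j$-th factor of $\E^{-\I\V^{(N)}}$ as $a_j=\E^{x_j}$ and the corresponding factor of $\nu^{(N)}$ as $b_j=1+x_j$, with $x_j=\pm\I\frac{\Time}{N}V^{(\dag)}(\cdots)$. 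I would then apply the telescoping identity
\[
\prod_{j=1}^{2N}a_j-\prod_{j=1}^{2N}b_j=\sum_{j=1}^{2N}\Big(\prod_{i<j}a_i\Big)(a_j-b_j)\Big(\prod_{i>j}b_i\Big),
\]
and bound each term using submultiplicativity of $\mynorm{\cdot}_{\delta_C}$. The per-factor bounds are $\mynorm{a_j}_{\delta_C},\mynorm{b_j}_{\delta_C}\le \E^{\frac{\Time}{N}\mynorm{V}_{\delta_C}}$ and $\mynorm{a_j-b_j}_{\delta_C}=\mynorm{\E^{x_j}-1-x_j}_{\delta_C}\le\big(\tfrac{\Time}{N}\mynorm{V}_{\delta_C}\big)^2\E^{\frac{\Time}{N}\mynorm{V}_{\delta_C}}$, using the elementary remainder estimate $\E^{s}-1-s\le s^2\E^{s}$ together with the fact that $\mynorm{V^{(\dag)}(\cdots)}_{\delta_C}=\mynorm{V}_{\delta_C}$, since the norm \eqref{eq:Grassmann_Algebra_Norm} is insensitive to which time slice the variables belong to. Each telescoping term then carries exactly $2N-1$ factor norms and the difference factor, whose exponentials combine to $\E^{2\Time\mynorm{V}_{\delta_C}}$; summing the $2N$ terms gives $\mynorm{\Delta}_{\delta_C}\le 2\frac{\Time^2}{N}\mynorm{V}_{\delta_C}^2\E^{2\Time\mynorm{V}_{\delta_C}}$, which is $O(1/N)$ and hence tends to $0$.

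I expect the main obstacle to be the first inequality: the determinant/Gram bound with external fields is the analytically deepest ingredient, and it is imported from \cite{Pedra2008} through Lemma~\ref{lem:det_const}. Conditional on that, the telescoping is routine bookkeeping; within it, the one point genuinely requiring care is the commutativity argument that factorizes $\E^{-\I\V^{(N)}}$ into the same $2N$ factors as $\nu^{(N)}$, together with the per-slice identity $\mynorm{V^{(\dag)}(\cdots)}_{\delta_C}=\mynorm{V}_{\delta_C}$ that keeps the estimate uniform across time slices.
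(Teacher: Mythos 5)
Your proposal is correct and follows essentially the same route as the paper: the first inequality is obtained by invoking the determinant-bound machinery of \cite{Pedra2008} (the paper cites \cite[Lemma~B.7]{MSbook} together with \cite[Definition~1.2]{Pedra2008} for exactly this step), and the second inequality is the telescoping-plus-submultiplicativity argument of \cite[Eq.~(57)]{Salmhofer:2009wm}, including the two observations the paper highlights, namely that hermitian conjugation is an isometry of $\mynorm{\cdot}_{\delta_C}$ and that time-locality makes $\mynorm{V}_{\delta_C}$ independent of the slice and of $N$. In fact, your write-up supplies the explicit factorization, telescoping identity, and per-factor estimates that the paper's proof only references, and your bookkeeping reproduces the stated constant $2\frac{\Time^2}{N}\mynorm{V}_{\delta_C}^2\E^{2\Time\mynorm{V}_{\delta_C}}$ exactly.
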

\begin{proof}
This proof is essentially the same as in \cite{Salmhofer:2009wm}. We include it here because it shows how simple error estimates become upon using the norms we have introduced on the Grassmann algebra. The intuition is that every power of the Grassmann fields gets replaced by a power of $\delta_C$, i.e.\ the determinant constant can be regarded as the natural `size' of each Grassmann-valued field in the Grassmann Gaussian integral, similarly to the way one can think of the standard deviation as the `typical' size of a real-valued Gaussian variable that is centered at zero.

The first inequality follows by \cite[Lemma~B.7]{MSbook} combined with the definition of the determinant bound as in \cite[Definition~1.2]{Pedra2008}. The second inequality may be derived similarly to \cite[Equation~(57)]{Salmhofer:2009wm} using submultiplicativity of the norm and keeping in mind that hermitian conjugation is an isometry of $\mynorm{\cdot}_{\delta_C}$. As $V$ is local in time, there is no dependence of $\mynorm{V}_{\delta_C}$ on $N$ and so the limit of the bound vanishes as $N \to \infty$. We say more about the implications of locality in Remark \ref{rem:locality_V}.
\end{proof}

Thus we have proven the following theorem, where convergence holds in any norm on $\C\{c^+(x),c^-(x)\}_{x \in {\Xpos}}$.  

\begin{theorem}\label{theorem-FI}
The generating function $Z(c^-,c^+)$ is the limit 
\begin{align}\label{eq:path_integral_Z}
\begin{aligned}
	Z(c^-,c^+)=& \det(1+ \E^{-\beta Q } \E^{i A_+ \Time}\E^{- i A_- \Time}) \E^{(c^-,\E^{-i A_- \Time} f_{\beta} \E^{i A_- \Time}c^+)_{\Xpos}}\\
	 & \times \lim_{N \rightarrow \infty} \left( \mu_{G^{(N)}} \ast \E^{-i \V^{(N)}}(\bar{\zeta}^{(N)},\zeta^{(N)})\right).
\end{aligned}
\end{align}
\end{theorem}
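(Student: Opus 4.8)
The plan is to assemble the claim from the discrete-time representation \eqref{eq:first_def_Z_N} together with the completed-square form of $Z_N$ and the re-exponentiation estimate of the preceding Lemma. Concretely, I would start from $Z(c^-,c^+)=\lim_{N\to\infty}Z_N(c^-,c^+)$ and substitute the completed-square expression
\[
Z_N=\det\!\big(1+\E^{-\beta Q}\E^{iA_+\Time}\E^{-iA_-\Time}\big)\,\E^{-(c^-,(G^{(N)}_{+-})_{N+1,N+1}c^+)_{\Xpos}}\,\mu_{G^{(N)}}\ast\nu^{(N)}(\bar\zeta^{(N)},\zeta^{(N)}).
\]
The first observation is that the determinant prefactor is already independent of $N$: once the powers $U_N^N=\E^{iA_+\Time}$, $(U_N^\dag)^N=\E^{-iA_-\Time}$ and $(U_N^{-1})^N=\E^{-iA_+\Time}$ are carried out it collapses to $\det(1+\E^{-\beta Q}\E^{iA_+\Time}\E^{-iA_-\Time})$, exactly the first factor in the claim, so it may be pulled out of the limit unchanged.

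Next I would identify the quadratic source term. Starting from the explicit form \eqref{eq:equiv} and inserting the same powers of $U_N$, the factor $(U_N^{-1})^N$ cancels one factor $U_N^N$, leaving $-(G^{(N)}_{+-})_{N+1,N+1}$ expressed through $\E^{-\beta Q}$, $\E^{iA_+\Time}$, $\E^{-iA_-\Time}$ and a resolvent $(1+\cdots)^{-1}$; equivalently, by Lemma \ref{lem:det_const} this is simply $-C_{+-}(t_{N+1},t_{N+1})=-C_{+-}(\Time,\Time)$. Applying the resolvent identity $A(1+BA)^{-1}=(1+AB)^{-1}A$ to commute the outer exponential through the inverse and recognizing the Fermi-type operator $f_\beta$, I would rewrite this as $\E^{-iA_-\Time}f_\beta\E^{iA_-\Time}$, so that $\E^{-(c^-,(G^{(N)}_{+-})_{N+1,N+1}c^+)_{\Xpos}}$ becomes the source exponential $\E^{(c^-,\E^{-iA_-\Time}f_\beta\E^{iA_-\Time}c^+)_{\Xpos}}$ appearing in the claim. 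This factor is again $N$-independent and leaves the limit.

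It then remains to treat the Grassmann Gaussian convolution. Writing $\nu^{(N)}=\E^{-i\V^{(N)}}-\Delta$ with $\Delta$ as in the preceding Lemma, linearity of the convolution gives $\mu_{G^{(N)}}\ast\nu^{(N)}=\mu_{G^{(N)}}\ast\E^{-i\V^{(N)}}-\mu_{G^{(N)}}\ast\Delta$. The Lemma supplies the uniform estimate $\abs{\mu_{G^{(N)}}\ast\Delta(\bar\zeta^{(N)},\zeta^{(N)})}\le\mynorm{\Delta}_{\delta_C}\le 2\tfrac{\Time^2}{N}\mynorm{V}_{\delta_C}^2\E^{2\Time\mynorm{V}_{\delta_C}}$ in any norm on $\C\{c^+(x),c^-(x)\}_{x\in\Xpos}$, which tends to $0$ as $N\to\infty$. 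Hence the error term drops out of the limit; and since $Z=\lim_N Z_N$ exists while the two prefactors are constant in $N$, it follows that $\lim_N\mu_{G^{(N)}}\ast\E^{-i\V^{(N)}}(\bar\zeta^{(N)},\zeta^{(N)})$ exists and that the claimed identity, with convergence in any norm on the source algebra, holds.

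The main obstacle I anticipate is not the analytic estimate itself—that is delivered by the preceding Lemma and rests on the $N$-uniform determinant bound $\delta_C$ of Lemma \ref{lem:det_const}, which is exactly what prevents the accumulated $O(\Time/N)$ slice-errors from exploding over the $\sim N$ time steps—but rather the operator-ordering bookkeeping in the second step: since $A_+$, $A_-$ and $Q$ do not commute, the identification of $-(G^{(N)}_{+-})_{N+1,N+1}$ with $\E^{-iA_-\Time}f_\beta\E^{iA_-\Time}$ must apply the resolvent identity in the correct order, and one must verify that the source vectors $\bar\zeta^{(N)},\zeta^{(N)}$ enter with matching factors. Here the identification $G^{(N)}=C$ at $t_{N+1}=\Time$ from Lemma \ref{lem:det_const} provides a useful consistency check on the ordering. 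Once these algebraic identities are confirmed, no topological subtlety remains, as convergence is asserted in an arbitrary norm on the source Grassmann algebra, and the argument closes.
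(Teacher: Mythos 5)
Your proposal is correct and takes essentially the same route as the paper: the paper's proof of Theorem \ref{theorem-FI} consists exactly of the chain you assemble, namely the completed-square form of $Z_N$ with the $N$-independent determinant and source prefactors (the identification $-\bigl(G^{(N)}_{+-}\bigr)_{N+1,N+1}=\E^{-iA_-\Time}f_\beta\,\E^{iA_-\Time}$ being left implicit via Lemma \ref{lem:det_const}), followed by the re-exponentiation Lemma giving $\abs{\mu_{G^{(N)}}\ast\Delta}\le 2\tfrac{\Time^2}{N}\mynorm{V}_{\delta_C}^2\E^{2\Time\mynorm{V}_{\delta_C}}\to 0$ so that $\nu^{(N)}$ may be replaced by $\E^{-i\V^{(N)}}$ in the limit. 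Your extra resolvent-identity bookkeeping for the source term is a useful explicit filling-in of a step the paper states without detail, not a different argument.
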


The key point in the proof was the existence of a continuum covariance with determinant bound $\delta_C$. Let us conclude with two remarks.

\begin{remark} \label{rem:locality_V}
In the limit $N \rightarrow \infty$, $\V= \lim_{N\rightarrow \infty} \V^{(N)}$ is formally given by
	\begin{align}
		\V &=  \int_{0}^{\Time} \rd t \  \left(V(\bar{\psi}^+_t,\psi^+_t)) - V^{\dag}(\bar{\psi}^-_t,\psi^-_t)\right).
	\end{align}
Thus it is local in both time indices $(t,\sigma)$. This locality reflects itself in the norm bounds on truncated expectation values in Theorem \ref{th:main_theorem}. The Theorem uses a slightly different norm than the norm given in \eqref{eq:Grassmann_Algebra_Norm}, namely it uses a norm directly on the coefficient functions $g_{m,\bar{m}}: \X^{(N)} \times \X^{(N)} \rightarrow \C$. We define
$\abs{g_{m,\bar{m}}}_{1,\infty , \X^{(N)}}  = \max\{\gamma_1, \gamma_2\}$ where
\beq\begin{split}
\gamma_1 
&= \max_j \sup_{X_j \in \X^{(N)}} \int_{\X^{(N)}} \prod_{\substack{i=1 \\ i\neq j}}^m \rd X_i \rd^{\bar{m}}Y \abs{g_{m,\bar{m}}(X,Y)} \; ,
\\
\gamma_2 
&= \max_j \sup_{Y_j \in \X^{(N)}} \int_{\X^{(N)}}  \rd^mX\prod_{\substack{i=1 \\ i\neq j}}^{\bar{m}} \rd Y_i \abs{g_{m,\bar{m}}(X,Y)}   \; .
\end{split}
\eeq
and
\begin{align}\label{eq:old_seminorm}
	\eunorm{g}_{h,\X^{(N)}} = \sum_{\substack{m,\bar{m} \geq 0}} |g_{m,\bar{m}}|_{1,\infty , \X^{(N)}} h^{m+\bar{m}}.
\end{align}
Locality of $\V$ and each individual $\V^{(N)}$ then translates to $\eunorm{\V^{(N)}}_{h,\X^{(N)}} = \eunorm{V}_{h,\Xpos}$. 
\end{remark}
\begin{remark}\label{rem:concrete_scenarios}
	\eqref{eq:general_covariance} and \eqref{eq:path_integral_Z} are rather complicated to work with directly. Thus, the explicit bounds on truncated expectation values given below in Lemma \ref{sec:example} focus on two simpler cases in which the covariance simplifies drastically. In both cases we assume $[A,B]=0$ and $[B,Q]=0$, such that
\begin{align}\label{eq:covariance_A_B_commute}
		\begin{aligned}
	C_{-+}(t,t') =& \E^{-it A} f_{\beta} \left(-(Q-2 \frac{\Time}{\beta}B) \right) \E^{it' A} \E^{(-2\Time+t+t')B},\\
	C_{+-}(t,t') =& -\E^{-itA} f_{\beta}\left(-(Q-2 \frac{\Time}{\beta}B) \right) \E^{-\beta Q} \E^{it'A}\E^{-(t'+t)B}, \\
	C_{++}(t,t') =& 1_{t \geq t'}\E^{-it A} f_{\beta}\left(-(Q-2 \frac{\Time}{\beta}B) \right) \E^{it' A}\E^{(t'-t)B}\\
	& -1_{t<t'} \E^{-itA}f_{\beta} \left(-(Q-2 \frac{\Time}{\beta}B) \right) \E^{-\beta Q}  \E^{it'A}\E^{(t'-t-2\Time) B}, \\
	C_{--}(t,t') =& 1_{t' \geq t}\E^{-it A} f_{\beta}\left(-(Q-2 \frac{\Time}{\beta}B) \right) \E^{it' A} \E^{(t-t')B}\\
	&-1_{t'<t} \E^{-itA} f_{\beta}\left(-(Q-2 \frac{\Time}{\beta}B) \right) \E^{-\beta Q}  \E^{it'A}\E^{(t-t'-2\Time) B}.
		\end{aligned}
\end{align}
	Here we used the usual Fermi function $f_{\beta}(E)=\left(1+\E^{\beta E} \right)^{-1}$. We denote the basis in which both $Q$ and $B$ are simultaneously diagonalized by $\mathscr{L}$. We write $q_\ell$ and $b_\ell$ for the respective eigenvalues. Moreover, we define 
\begin{align}\label{eq:gamma_ell}
\gamma_\ell = q_\ell + 2\frac{\Time}{\beta}b_\ell
\end{align} for the eigenvalues occurring in the Fermi function. For the explict calculations in Section \ref{sec:example} we will look at the case of a truly dissipative quantum systems, i.e. $b_{\ell}>0$ for all $\ell \in \mathscr{L}$ and the case of unitary time evolution, i.e. B=0. In the unitary case the above covariance simplifies to the usual Keldysh covariance
\beq\label{eq:C_for_B=0}
\begin{split}
C_{-+}(t,t') 
&= 
\E^{-it A} f_{\beta}(-Q) \E^{it' A} 
\\
C_{--}(t,t') 
&= 
1_{t'<t}C_{+-}(t,t') + 1_{t' \geq t} C_{-+}(t,t')
\\
C_{+-}(t,t') 
&= 
-\E^{-it A} f_{\beta}(Q) \E^{it' A} 	
\\
C_{++}(t,t') 
&= 
1_{t<t'}C_{+-}(t,t') + 1_{t \geq t'} C_{-+}(t,t') 
\end{split}
\eeq
Furthermore, the generating functional becomes
\begin{align}\label{eq:final_form}
\begin{aligned}
	Z(c^-,c^+)= \det(1+ \E^{-\beta Q}) \E^{(c^-,\E^{-i \Time A} f_{\beta}(Q) \E^{i \Time A} c^+)_{\Xpos}} \lim_{N \rightarrow \infty} \left( \mu_{G^{(N)}} \ast \E^{-i \V^{(N)}}(\bar{\zeta}^{(N)},\zeta^{(N)})\right).
\end{aligned}
\end{align}
As expected, there is no contribution to the connected $n$-point functions for the case of the free theory ( $V=0$ ) for $n \ge 4$, and the free connected two-point function is given by $\E^{-i \Time A} f_{\beta}(Q) \E^{i \Time A}$ and thus represents the time evolution of the Fermi-Dirac distribution derived from von Neumann's equation. 
\end{remark}

\subsection{Bounds on the Truncated Expectation Values}
If we insert our path integral formulation into the definition of the generating functional of truncated expectation values \eqref{eq:def_F} and define the Wilsonian effective action
\begin{align} \label{eq:def_effective_action}
\W^{(N)}(G^{(N)},V) = - \log (\frac{1}{\left( \mu_{G^{(N)}} \ast \E^{-\V}(0,0)\right)}\mu_{G^{(N)}} \ast \E^{-\i V^{(N)}}(\bar{\zeta}^{(N)},\zeta^{(N)})),
\end{align}
we see that
\begin{align}\label{eq:expansion_F0}
	\begin{aligned}
		F(c^-,c^+) =(c^-,\E^{-i \Time A_-} f_{\beta} \E^{i \Time A_-} c^+)_{\Xpos} - \lim_{N \rightarrow \infty}\W^{(N)}.
			\end{aligned}
\end{align}
We are now in a positon to formulate the bounds on the truncated expectation values. These bounds contain two more characteristic data of the covariance
\begin{definition}\label{def:decay_const}
We define the decay constant 
\begin{align}
	\begin{aligned}
		\alpha_C &= \max \left\{\sup _{X \in \X} \int_\X|C(X, Y)| \mathrm{d} Y, \sup _{X \in \X} \int_\X |C(Y, X)| \mathrm{d} Y\right\}= \eunorm{C}_{1,\infty,\X}
		\end{aligned}
	\end{align}
and the modified decay constant  
	\begin{align}
		\tilde{\alpha}_C = \max\left\{\sup_{Z\in\X} \int_{\Xpos} \rd x \ \abs{C\left[(\Time,+,x),Z \right]},\sup_{Z\in\X} \int_{\Xpos} \rd x \ \abs{C\left[Z,(\Time,-,x) \right]}  \right\}.
	\end{align}
\end{definition}
Clearly, $\alpha_C$ is finite for the general covariance \eqref{eq:general_covariance}. Explicit bounds for the (modified) decay constant in the setting of Remark \ref{rem:concrete_scenarios} will be given in Section \ref{sec:example}.

We can now give the proof of Theorem \ref{th:main_theorem}.

\begin{proof}
	As $V$ is even and, due to locality of $\V$, $\omega_C \eunorm{\V}_{3\delta_C,\X} \leq \frac{1}{2}$ all conditions of \cite[Theorem~1]{Salmhofer:2009wm} -- replacing Gram constants by determinant bounds when adequate -- are fulfilled. Thus the expansion of $\W^{(N)}$ in terms of powers of $\V^{(N)}$ converges and we have analyticity of the limit $\W(C,V)=\lim_{N\rightarrow \infty} \W(G^{(N)},V)$ in the fields $(\bar{\zeta},\zeta)$.
	
As the tree expansion for $W_{m,\bar{m}}^{(N)}=\frac{1}{m! \bar{m}!} \frac{\delta^{m}}{\delta \psi^{m}}  \frac{\delta^{\bar{m}}}{\delta \bar{\psi}^{\bar{m}}} \W^{(N)}$ given in \cite{Salmhofer:2009wm} is  convergent in the limit $N \rightarrow \infty$, we can write $W_{m,\bar{m}} = \lim_{N\rightarrow \infty} W^{(N)}_{m,\bar{m}}$,  and the generating function is given by
\begin{align}\label{eq:expansion_F}
	\begin{aligned}
		F(c^-,c^+) =(c^-,\E^{-i \Time A_-} f_{\beta} \E^{i \Time A_-} c^+)_{\Xpos} - \sum_{\substack{m,\bar{m}\\ m+\bar{m} \text{ even}}} \int_\X \rd^m X  \rd^{\bar{m}}Y  W_{m,\bar{m}}(X,Y) \bar{\zeta}^{\bar{m}}(Y) \zeta^{m}(X).
	\end{aligned}
\end{align}
Applying the appropriate derivatives to \eqref{eq:expansion_F} we may write the truncated expectation values as
\begin{align}\label{aliali}
	\begin{aligned}
	 \gammaT_{m,\bar{m}} &(x_1,\dots,x_m,y_1,\dots ,y_{\bar{m}},\Time)  = \E^{-i \Time A_-} f_{\beta}(Q)_{y_1,x_1} \E^{i \Time A_-} \delta_{m,1}\delta_{\bar{m},1} \\
	 &+ (-1)^{m-1} \int_{\X} \prod_{k=1}^m \rd Z_k  \int_{\X} \prod_{\ell=1}^{\bar{m}}  \rd Z'_\ell  \ W_{m,\bar{m}}\left( Z_1, \dots, Z_m, Z'_1, \dots,Z'_{\bar{m}} \right)\\
	 &\times m! \bar{m}! \prod_{\ell=1}^{\bar{m}} C\left((+,\Time,y_\ell),Z'_\ell \right) \prod_{k=1}^m C\left(Z_k,(-,\Time,x_k) \right).
	\end{aligned}
	\end{align}
	While the expression may seem unintuitive at first, Figure \ref{fig:sketch_truncated_exp_values} gives a nice graphical representation of the truncated expectation values for $m\neq 1$ or $\bar{m} \neq 1$.
	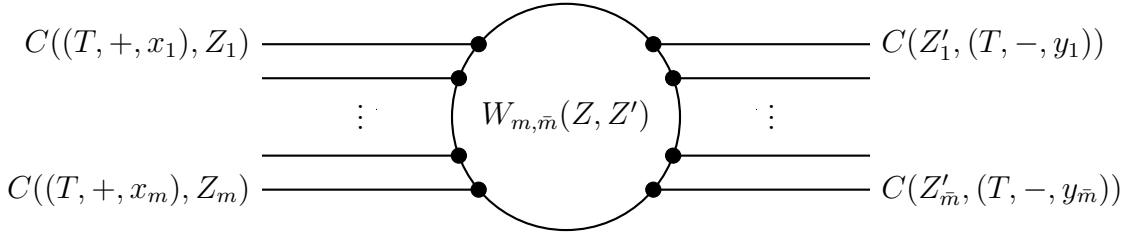
\begin{figure}[h!]
\begin{center}
	\begin{tikzpicture}
		\draw[thick] (0,0) circle (1.5cm)node{$W_{m,\bar{m}}(Z,Z')$};
		\filldraw (20:1.5) circle (0.1cm);
		\filldraw (40:1.5) circle (0.1cm);
		\filldraw (-20:1.5) circle (0.1cm);
		\filldraw (-40:1.5) circle (0.1cm);
		\draw[thick] (20:1.5cm) -- (4,{1.5* sin(20)});
		\draw[thick] (40:1.5cm) -- (4,{1.5* sin(40)})node[anchor = west]{$C(Z'_1,(\Time,-,y_1))$};
		\draw[thick] (-20:1.5cm) -- (4,{1.5* sin(-20)});
		\draw[thick] (-40:1.5cm) -- (4,{1.5* sin(-40)})node[anchor = west]{$C(Z'_{\bar{m}},(\Time,-,y_{\bar{m}}))$};
		
		\draw (2.5,0.1) -- (2.5,0.1)node[anchor=west]{\vdots} ;
		\draw (-2.5,0.1) -- (-2.5,0.1)node[anchor=east]{\vdots} ;
		
		\filldraw (1600:1.5) circle (0.1cm);
		\filldraw (140:1.5) circle (0.1cm);
		\filldraw (200:1.5) circle (0.1cm);
		\filldraw (220:1.5) circle (0.1cm);
		\draw[thick] (200:1.5cm) -- (-4,{1.5* sin(-20)});
		\draw[thick] (220:1.5cm) -- (-4,{1.5* sin(-40)})node[anchor = east]{$C((\Time,+,x_m),Z_m)$};
		\draw[thick] (160:1.5cm) -- (-4,{1.5* sin(20)});
		\draw[thick] (140:1.5cm) -- (-4,{1.5* sin(40)})node[anchor = east]{$C((\Time,+,x_1),Z_1)$};
		
	\end{tikzpicture}	
	\end{center}
	\caption{A graphical representation of the truncated expectation values for $m\neq1$ or $\bar{m}\neq 1$. The large dots stand for integration over the $Z$ and $Z'$ variables in \eqref{aliali}.}
	\label{fig:sketch_truncated_exp_values}
\end{figure}

Taking the one-infinity norm we have
\begin{align*}
	\frac{1}{m! \bar{m}!} \abs{\gammaT_{m,\bar{m}}-\gammaT_{m,\bar{m}}\big|_{V=0}}_{1,\infty, {\Xpos}} =& \max_{i}\sup_{x_i \in {\Xpos}} \int \prod_{x_j \neq x_i} \rd x_j  \bigg|\int \prod_{Z_j} \rd Z_j W(Z_1, \dots Z_{m+\bar{m}}) \\
		&\times \prod_{\ell=1}^m C\left((\Time,+,x_\ell),Z_\ell\right) \prod_{k=1}^{\bar{m}} C\left(Z_{m+k},(\Time,-,x_{k+m}\ell)\right)\bigg| \\
  \leq&\eunorm{W_{m,\bar{m}}}_{1,\infty,\X}\alpha_C \tilde{\alpha}_C^{m+\bar{m}-1} \leq 2 \tilde{\alpha}_C^{m+\bar{m}-1} \alpha_C \delta^{-m-\bar{m}} \eunorm{V}_{3\delta_C,{\Xpos}}.
\end{align*}
The last inequality follows, as by \cite[Theorem~1]{Salmhofer:2009wm}  $\abs{W_{m,\bar{m}}}_{1,\infty,\X} \leq 2 \delta^{-m-\bar{m}} \eunorm{\V}_{3\delta_C,{\Xpos}}$.

The argument how these bounds imply bounds for unordered monomials of degree $n\neq2$ can be found (up to some minor sign errors) in \cite[Section~2.2]{Salmhofer:2009wm} and remains essentially unchanged.
\end{proof}

\begin{remark}
The derivation of the qualitative bounds on truncated expectation values is largely analogous to the derivation of \cite[Theorem~1]{Salmhofer:2009wm}, albeit in a slightly different context. However, our result improves upon \cite[Theorem~1]{Salmhofer:2009wm} by replacing the decay constant via the modified decay constant. This improvement translates to the context of \cite[Theorem~1]{Salmhofer:2009wm}.
\end{remark}

\subsection{Examples of Determinant Bounds and Decay Constants} \label{sec:example}

We see that the essential quantities for the bounds given in Theorem \ref{th:main_theorem}, determining size and even existence of the bounds, are the determinant bound $\delta_C$ and the (modified) decay constant $\alpha_C$. Henceforth, in order to give physical meaning to the bound, we need explicit bounds for $\alpha_C$ and $\delta_C$. We will do so explicitly for the two scenarios established in Remark \ref{rem:concrete_scenarios}. 

\begin{lemma}\label{lem:determinant_bound_and_decay_const}
Let $C$ be the covariance defined in \eqref{eq:covariance_A_B_commute}. For $[A,B]=0$ and $[Q,B]=0$ we have the following determinant bounds
\begin{alignat}{2}
	&B>0: \quad \delta_C=6\varepsilon^{-\frac{1}{2}}(1+\E^{-\frac{1}{2}\beta \tilde{q}}), \quad \quad \quad && B=0: \quad \delta_C=12\varepsilon^{-\frac{1}{2}},
\end{alignat}
where for any variable $\phi_\ell$, we define $\tilde{\phi}$ such that
\begin{align}
	 \tilde{\phi} = 	\inf_{\ell}  \phi_{\ell}. 
\end{align}
Moreover, we have the following general upper bounds for the decay constants
	\begin{alignat*}{2} 
	&B>0:  \quad && \quad \quad \quad B=0 \\
	& \alpha_C \leq 2 \abs{{\Xpos}} \left(1+\E^{-\beta \tilde{q}}\right) \frac{1}{\tilde{b}} \left(1-\E^{-\Time \tilde{b}} \right),\quad   &&  \quad \quad \quad \alpha_C \leq 2 \abs{{\Xpos}} \Time, \tag{\stepcounter{equation}\theequation}\\
	&  \tilde{\alpha}_C \leq \abs{{\Xpos}} \max\left\{1,\E^{-\beta \tilde q} \E^{-\Time \tilde{b}}\right\},   \quad && \quad  \quad \quad \tilde{\alpha}_C \leq \abs{{\Xpos}},
	\end{alignat*}
	where, as in \eqref{eq:gamma_ell}, $q_{\ell}$ and $b_{\ell}$ refer to the eigenvalues of $Q$ and $B$, respectively.
\end{lemma}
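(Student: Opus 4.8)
The plan is to treat the two assertions separately: the determinant bounds follow from a Gram--Hadamard estimate in the spirit of \cite{Pedra2008,dRS}, while the (modified) decay constants are obtained by direct $L^1$-estimates of the matrix entries in \eqref{eq:covariance_A_B_commute}. In both cases the crucial simplification is that $[A,B]=[Q,B]=0$, so that each block factorizes as $C_{\rho\sigma}(t,t')=\E^{-\I tA}(\cdots)\E^{\I t'A}$, where $(\cdots)$ is diagonal in the joint eigenbasis $\mathscr{L}$ of $Q$ and $B$ and is built only from the Fermi function $f_\beta(-(Q-2\tfrac{\Time}{\beta}B))$ (eigenvalues in $(0,1)$), the factor $\E^{-\beta Q}$, and the $B$-exponentials $\E^{sB}$, which on the relevant range of $(t,t')$ carry $s\le 0$. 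Since $A$ is hermitian, the rotations $\E^{\pm\I tA}$ are isometries and are absorbed without cost into the Gram vectors and into the operator-norm estimates.

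For the determinant bound I would exhibit an explicit Gram representation $C(X,Y)=\langle f_X, g_Y\rangle$ in an auxiliary Hilbert space and invoke \cite[Theorem~1.3]{Pedra2008}, which gives the bound $\delta_C^{2n}$ as soon as $\sup_X\norm{f_X}\le\delta_C$ and $\sup_Y\norm{g_Y}\le\delta_C$; the extra unit vectors $v_i,q_j$ of Lemma \ref{lem:det_const} are accommodated by tensoring the Gram vectors with them, which cannot increase norms. The Hilbert space is the tensor product of $\cH$ (whose normalization $\bracket{x}{x}=\varepsilon^{-1}$ produces the overall factor $\varepsilon^{-1/2}$), the two-valued contour index, and an auxiliary factor encoding the sharp time-ordering indicators $1_{t\gtrless t'}$ occurring in $C_{++}$ and $C_{--}$. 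Each Fermi function is split symmetrically as $f_\beta^{1/2}\cdot f_\beta^{1/2}$ between $f_X$ and $g_Y$, the $B$-exponentials are split in the same way and are $\le1$ in norm on the allowed range, and $\E^{-\beta Q}$ contributes $\E^{-\frac12\beta q_\ell}\le\E^{-\frac12\beta\tilde q}$, which is the origin of the factor $(1+\E^{-\frac12\beta\tilde q})$ in the $B>0$ case. Summing the contributions of the finitely many components (the four blocks, with $C_{++}$ and $C_{--}$ each carrying a direct and a time-ordered piece) and tracking the resulting numerical prefactors yields the constants $6$ (for $B>0$) and $12$ (for $B=0$).

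For the decay constants I would use that, with the $\varepsilon^{-1}$-normalization, every entry obeys $\abs{C_{\rho\sigma}(t,t')_{xy}}\le\varepsilon^{-1}\norm{C_{\rho\sigma}(t,t')}$, whence $\int_{\Xpos}\rd y\,\abs{C_{\rho\sigma}(t,t')_{xy}}\le\abs{\Xpos}\,\norm{C_{\rho\sigma}(t,t')}$ and symmetrically in $x$. For $\alpha_C$ one then integrates the operator norm over $t'$ and sums over $\sigma$. In the unitary case $B=0$ every block has norm $\le1$, so the $t'$-integral contributes $\Time$ and the two sheets a factor $2$, giving $\alpha_C\le 2\abs{\Xpos}\Time$, and with a single spatial integral and the time frozen at $\Time$ one gets $\tilde\alpha_C\le\abs{\Xpos}$. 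For $B>0$ the gain is that, in the eigenbasis, $\norm{C_{\rho\sigma}(t,t')}$ carries a factor $\E^{sb_\ell}$ with $s\le0$, and
\begin{align}
\int_0^{\Time}\E^{-(\Time-t')\tilde b}\,\rd t'=\frac{1}{\tilde b}\bigl(1-\E^{-\Time\tilde b}\bigr),
\end{align}
so time integration is controlled by $\tilde b^{-1}(1-\E^{-\Time\tilde b})$ rather than by $\Time$; the two time-ordered pieces of each diagonal block differ by the extra $\E^{-\beta Q}\le\E^{-\beta\tilde q}$, which produces $(1+\E^{-\beta\tilde q})$, while the monotone decrease of $b\mapsto \tilde b^{-1}(1-\E^{-\Time b})$ permits replacing every $b_\ell$ by $\tilde b=\inf_\ell b_\ell$. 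For $\tilde\alpha_C$ the time is fixed at $\Time$, so one only compares $\norm{C_{++}(\Time,t')}\le1$ with $\norm{C_{+-}(\Time,t')}\le\E^{-\beta\tilde q}\E^{-\Time\tilde b}$, giving the stated maximum.

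The main obstacle is the Gram representation of the sharp time-ordering, i.e.\ the jump of the covariance at $t=t'$: being antisymmetric, the indicator is not itself of positive type, and it is precisely this ultraviolet feature that forces the use of \cite{Pedra2008,Salmhofer2000,dRS} rather than a naive Hadamard bound. Producing an \emph{explicit}, and reasonably small, determinant constant -- rather than merely the finite one guaranteed by Lemma \ref{lem:det_const} -- is the delicate point, since the factors $6$ and $12$ depend on how economically the blocks and their time-ordered pieces are packaged into the auxiliary Hilbert space. The decay-constant estimates, by contrast, are routine once the operator norms of the blocks are read off in the eigenbasis.
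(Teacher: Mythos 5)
Your treatment of the decay constants is, in substance, the paper's own proof: bound each entry by $\varepsilon^{-1}$ times the operator norm of the block $C_{\rho\sigma}(t,t')$, read those norms off in the joint eigenbasis of $Q$ and $B$ (picking up $\E^{s\tilde b}$ with $s\le 0$ and $\E^{-\beta\tilde q}$ factors), integrate in time to get $\tilde b^{-1}(1-\E^{-\Time\tilde b})$ (respectively $\Time$ when $B=0$), and pay $\abs{\Xpos}$ for the spatial integral. That half is correct and complete.

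The determinant bounds, however, contain a genuine gap, and it sits exactly where you flag ``the main obstacle''. Your plan --- split $f_\beta$ as $f_\beta^{1/2}\cdot f_\beta^{1/2}$, split the $B$-exponentials, and encode the indicators $1_{t\gtrless t'}$ in a separate ``auxiliary factor'' of the Hilbert space --- cannot be carried out as stated: the kernel $1_{t\ge t'}$ on its own admits no Gram representation with vectors whose norms stay bounded uniformly in the time discretization (this is precisely the Matsubara-type UV problem that motivates \cite{Pedra2008}), so the time-ordering cannot be tensored on independently of the rest of the block. The resolution, which is the actual content of the paper's proof (following \cite[Section~4.1]{Pedra2008}), is to intertwine the step function with the Fermi factor and the energy denominators: one introduces $\phi(s,q)=\frac{1}{\sqrt{\pi}}\frac{\sqrt{q f_\beta(-q)}}{\I s-q}$ and explicit Gram vectors such as $g^{\sigma,\rho}_{t,x}(s,\ell)=\bracket{\ell}{\E^{\I tA}x}\,\phi(s,\abs{\gamma_\ell})\,\E^{\rho\I t(b_\ell/\abs{\gamma_\ell})s}\,1_{\sigma\gamma_\ell>0}$ in $L^2(\R\times\mathscr{L})$, so that the indicators $1_{t\gtrless t'}$ emerge from the pole of $(\I s-\abs{\gamma_\ell})^{-1}$ when the $s$-integral is evaluated by closing in the appropriate half-plane, while the sign decomposition $1_{\sigma\gamma_\ell>0}$ keeps every vector norm equal to $\varepsilon^{-1/2}$ (times $\E^{-\frac{1}{2}\beta\tilde q}$ for the pieces carrying $\E^{-\beta Q}$) uniformly in $t$. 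Without this construction, the specific constants $6\varepsilon^{-1/2}(1+\E^{-\frac{1}{2}\beta\tilde q})$ and $12\varepsilon^{-1/2}$, which you assert follow from ``tracking prefactors'', are not derived: in the paper they come from per-block bounds $\delta_{C_{+-}}$, $\delta_{C_{-+}}$, $\delta_{C_{\pm\pm}}$ computed from these explicit vectors (each Gram vector being a sum of two unit-size pieces, giving the factors of $2$) and then added across the four blocks in the spirit of \cite[Lemma~3.11]{Pedra2008}. So your skeleton --- per-block Gram bounds via \cite[Theorem~1.3]{Pedra2008}, then summation over blocks --- matches the paper, but the one step that makes the lemma true and produces the stated constants is missing rather than merely sketched.
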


\begin{proof}
	Let us begin by showing the respective determinant bounds for $B>0$. The proof may be summarized as constructing Gram representation for the individual block matrices and combining these by \cite[Theorem~1.3]{Pedra2008}.\\
	The Gram representations are inspired by \cite[Section~4.1]{Pedra2008} and thus we define for $q \in \R \setminus \{0\}$ 
\begin{align}
	\phi(s,q) = \frac{1}{\sqrt{\pi}} \frac{\sqrt{q f_{\beta}(-q)}}{is-q}.
\end{align}
Without loss of generality we assume that $\gamma_\ell \neq 0$, where $\gamma_\ell$ was defined in \eqref{eq:gamma_ell}. Otherwise, apply a small offset $\alpha$ to $Q$. As the determinant is continuous one can then take the limit $\alpha \rightarrow 0$. We define
	\begin{align}
		g_{t,x}^{\sigma, \rho}(s,\ell) &= \bracket{\ell}{\E^{itA} x} \phi(s,\abs{\gamma_\ell}) \E^{\rho it \frac{b_\ell}{\abs{\gamma_\ell}}s}1_{\sigma \gamma_\ell >0}
	\end{align}
and $h^{\rho}_{t,x}(s,l)=g_{t,x}^{-\rho}(s,\ell)\E^{is\beta}$. Similarly, let $\tilde{g};\tilde{h}$ contain an additional factor of $\E^{-\frac{1}{2}\beta q_\ell}$ and let the capitalized versions contain an additional factor of $\E^{i2\Time \frac{b_\ell}{\abs{\gamma_\ell}}s}$. As $\eunorm{\phi(\cdot,q)}_2 = f_{\beta}(q) \leq 1$, all these variants are clearly in $\cH = L^2(\R \times \mathscr{L},\rd s \times \rd \mu)$, where $\mu$ denotes the counting measure on $\mathscr{L}$ and $s$ the usual Lebesque measure. Furthermore, their norms are  $\eunorm{H_{t,x}^{\rho}}=\eunorm{h_{t,x}^{\rho}}=\eunorm{G_{t,x}^{\rho,\sigma}}=\eunorm{g_{t,x}^{\rho,\sigma}}=\varepsilon^{-\frac{1}{2}}$ and 
	$\eunorm{\tilde{H}_{t,x}^{\rho}}=\eunorm{\tilde{h}_{t,x}^{\rho}}=\eunorm{\tilde{G}_{t,x}^{\rho,\sigma}}=\eunorm{\tilde{g}_{t,x}^{\rho,\sigma}}=\varepsilon^{-\frac{1}{2}}\E^{-\frac{1}{2}\beta \tilde{q}}$. 
Calculation gives
\beq
\begin{split}
C((+,t,x),(-,t',y)) &= \bracket{-\tilde{g}^{+-}_{t,x}-\tilde{g}^{--}_{t,x}}{\tilde{g}^{++}_{t',y}+\tilde{h}^{+}_{t',y}} ,\\
C((-,t,x),(+,t',y))&= \bracket{g_{t,x}^{++}+g_{t,x}^{-+}}{G_{t',y}^{+-}+H_{t',y}^{-}}, 
\end{split}
\eeq
and
\beq
\begin{split}
C((+,t,x),(+,t',y))&= 1_{t>t'} \bracket{-\tilde{g}^{+-}_{t,x}-\tilde{g}^{--}_{t,x}}{\tilde{G}^{+-}_{t',y}+\tilde{H}^{-}_{t',y}} +1_{t \leq t'} \bracket{g_{t,x}^{+-}+g_{t,x}^{--}}{g_{t',y}^{+-}+h_{t',y}^{-}},\\
C((-,t,x),(-,t',y)) &=1_{t>t'} \bracket{-\tilde{g}^{++}_{t,x}-\tilde{g}^{-+}_{t,x}}{\tilde{G}^{++}_{t',y}+\tilde{H}^{+}_{t',y}} +1_{t \leq t'} \bracket{g_{t,x}^{++}+g_{t,x}^{-+}}{g_{t',y}^{++}+h_{t',y}^{+}}.
\end{split}
\eeq
By \cite[Theorem~1.3]{Pedra2008} this implies
\begin{alignat*}{2}
	&\delta_{C_{+-}} = 2\varepsilon^{-\frac{1}{2}}\E^{-\frac{1}{2}\beta \tilde{q}},
	&&\delta_{C_{-+}} = 2\varepsilon^{-\frac{1}{2}}, \tag{\stepcounter{equation}\theequation}\\
	&\delta_{C_{++}} = 2\varepsilon^{-\frac{1}{2}}(1+\E^{-\frac{1}{2}\beta \tilde{q}}),
	\quad &&\delta_{C_{--}} = 2\varepsilon^{-\frac{1}{2}}(1+\E^{-\frac{1}{2}\beta \tilde{q}}).
\end{alignat*}
By an argument very similar to \cite[Lemma~3.11]{Pedra2008} the sum of the individual determinant bounds gives a determinant bound for the entire covariance matrix concluding the proof.

An almost equivalent proof is possible for $B=0$. One then defines $g_{t,x}^{\sigma}(s,\ell) = \bracket{\ell}{\E^{i \cE t} x}\phi(s,|q_\ell|) 1_{\sigma q_\ell > 0}$ and $h_{t,x}^{\sigma}(s,l)=g_{t,x}^{\sigma}(s,\ell)\E^{is\beta}$. Then
	\begin{align}
	\begin{aligned}
		C((+,t,x),(-,t',y))&= \bracket{-g_{t,x}^+ -g_{t,x}^- }{h_{t',y}^+ +g_{t',y}^-}, \\
		C((-,t,x),(+,t',y))&= \bracket{g_{t,x}^+ +g_{t,x}^- }{g_{t',y}^+ +h_{t',y}^-} .
	\end{aligned}
	\end{align}
	By the peculiar structure of the covariance in \eqref{eq:C_for_B=0} and as $\eunorm{h_{t,x}^\sigma}=\eunorm{g_{t,x}^\sigma} = 2\varepsilon^{-\frac{1}{2}}$the previous arguments imply $\delta_C = 12 \varepsilon^{-\frac{1}{2}}$. This concludes the proof for the determinant bounds.\\
	
Let us now focus on proving the bounds for the decay constants in the case of $B>0$. (The bound given in Lemma \ref{lem:determinant_bound_and_decay_const} for $B=0$ is trival.)
Without assuming additional structure on $Q$, we may bound the decay constant by first establishing a bound for the operator norm $\eunorm{C_{\rho \sigma}(t,t')}$. Using the representation given in \eqref{eq:covariance_A_B_commute} we get
\begin{align}\label{eq:crude_est}
	\begin{aligned}
		\eunorm{C_{+-}(t,t')} &\leq \E^{-\beta \tilde{q}} \E^{-(t'+t)\tilde{b}} ,\\
		\eunorm{C_{-+}(t,t')} &\leq \E^{(-2\Time+t'+t)\tilde{b}}, \\
		\eunorm{C_{++}(t,t')} &\leq 1_{t<t'} \E^{-\beta \tilde{q}} \E^{(-2\Time+t'-t)\tilde{b}} +1_{t\geq t'}\E^{(t'-t)\tilde{b}},\\
		\eunorm{C_{--}(t,t')} &\leq 1_{t'<t} \E^{-\beta \tilde{q}} \E^{(-2\Time+t-t')\tilde{b}} +1_{t'\geq t}\E^{(t-t')\tilde{b}}.
	\end{aligned}
\end{align}
Assuming $\tilde{b} \neq 0$ ony may calculate that
\begin{align}\label{eq:first_last}
	\begin{aligned}
	\sup_{t} \int \rd t' \abs{(C((+,t,x),(-t',y))} &\leq \sup_{t} \varepsilon^{-1} \frac{1}{\tilde{b}} \E^{-\beta \tilde{q}}(1-\E^{-\Time \tilde{b}})\E^{-t \tilde{b}} = \varepsilon^{-1} \E^{-\beta \tilde{q}} \frac{1}{\tilde{b}}(1-\E^{-\Time \tilde{b}}),  \\
	\sup_{t} \int \rd t' \abs{(C((-,t,x),(+,t',y))} &\leq \sup_{t} \varepsilon^{-1} \frac{1}{\tilde{b}}(1-\E^{-\Time \tilde{b}})\E^{-(\Time-t) \tilde{b}} = \varepsilon^{-1} \frac{1}{\tilde{b}}(1-\E^{-\Time \tilde{b}}), \\	
	\sup_{t} \int \rd t' \abs{(C((+,t,x),(+,t',y))} &\leq \sup_{t} \varepsilon^{-1} \frac{1}{\tilde{b}}(1+\E^{-\beta \tilde{q}}) \left( \E^{(-\Time-t)\tilde{b}}-\E^{-2\Time \tilde{b}} +1 - \E^{-t \tilde{b}} \right)\\
	& = \varepsilon^{-1} \left(1+\E^{-\beta \tilde{q}}\right) \frac{1}{\tilde{b}}(1-\E^{-\Time \tilde{b}}),\\
	\sup_{t'} \int \rd t \abs{(C((+,t,x),(+,t',y))} &\leq \sup_{t'} \varepsilon^{-1} \frac{1}{\tilde{b}}(1+\E^{-\beta \tilde{q}}) \left( \E^{(-2\Time+t')\tilde{b}}-\E^{-2\Time \tilde{b}} +1 - \E^{(t'-\Time) \tilde{b}} \right)\\
	& = \varepsilon^{-1} \left(1+\E^{-\beta \tilde{q}}\right) \frac{1}{\tilde{b}}(1-\E^{-\Time \tilde{b}}). 
\end{aligned}
\end{align}
As $\eunorm{C_{\pm \mp}}$ is symmetric under exchange of $t$ and $t'$ and $\eunorm{C_{\pm \pm}}$ differ simply by exchanging $t$ and $t'$  the decay constant is then bounded by
\begin{align}
	\begin{aligned}
		\alpha_C &= \left.\max \left\{\sup _{X \in \X} \int_\X|C(X, Y)| \mathrm{d} Y, \sup _{X \in \X} \int_\X |C(Y, X)|\right) \mathrm{d} Y\right\}\\
		&\leq 2 \abs{{\Xpos}} \left(1+\E^{-\beta \tilde{q}}\right) \frac{1}{\tilde{b}} \left(1-\E^{-\Time \tilde{b}} \right).
	\end{aligned}
\end{align}
\end{proof}

Putting this into the context of Theorem \ref{th:main_theorem}, we see that the condition $2 \alpha_C \delta_C^{-2} \eunorm{\V}_{3\delta_C,\X} \leq \frac{1}{2}$ is always fulfilled for small enough times $\Time$. Furthermore, for unitary evolutions, i.e. $B=0$, we see that for small time $\Time$, the connected $n$-point functions can differ at most linearly in $\Time$ compared to a non-interacting system. Moreover, we can quantify what ``small times'' actually means. But in the unitary case $B=0$, our bounds do not allow us to go to arbitrary large times $\Time$. On the other hand, this is possible (not unexpectedly) for the dissipative quantum systems with $B > 0$. 

Generally, taking the thermodynamic limit of large system size, i.e. $\abs{{\Xpos}}\rightarrow \infty$ is not possible. This was to be expected as the general bound given here includes cases of (spatially) non-local interactions. Thus, the above bounds may be seen as a sort of trade-off between generality and optimal bounds for a specific class of systems. 

The results for the decay constant may be improved upon for localised interactions. The following Theorem shows one case in which the decay constant can be bounded uniformly in $\abs{{\Xpos}}$, allowing to take the thermdynamic limit of $\abs{{\Xpos}} \rightarrow \infty$. Assuming $\eunorm{\V}_{3\delta_C,\X}$ not to diverge as $\abs{{\Xpos}} \to \infty$, the thermodynamic limit can then be taken for the truncated expectation values.

\begin{lemma} \label{lem:uniform_decay}
	We assume $Q=A=B>0$ and given any metric $d$ on ${\Xpos}$ (this could be lattice distance or similar), we assume
	\begin{align}
		\abs{Q(x,y)}\leq K_{\nu}(1+d(x,y))^{-\nu}
	\end{align}
	for some $\nu>0$ and $K_{\nu}<\infty$. Let $\Delta>0$ be defined such that $\sigma(Q)\cap[0,\Delta) = \emptyset$ and $n\in \N$, such that $1 \leq n < \nu $. The decay constant $\alpha_C$ and modified decay constant are bounded by
	\begin{align} \label{eq:uniform_bound}
	\begin{aligned}
		\alpha_C &\leq \xi \sqrt{k(2n)}k(\nu -n)^{n} \left( \frac{\Delta}{4}^{-(n+1)} +\frac{\Delta}{4}^{-1}\right) \left( \eunorm{Q} + \Delta \right) \frac{1}{1-\E^{-\beta \frac{\Delta}{2}}} (1+\E^{-\beta \frac{\Delta}{2}})\frac{1}{\Delta}(1-\E^{-\Time \frac{\Delta}{4}}),\\
		\tilde{\alpha}_C &\leq \xi \sqrt{k(2n)}k(\nu -n)^{n} \left( \left(\frac{\Delta}{4}\right)^{-(n+1)} +\left(\frac{\Delta}{4}\right)^{-1}\right) \left( \eunorm{Q} + \Delta \right) \frac{1}{1-\E^{-\beta \frac{\Delta}{2}}}.
		\end{aligned}
	\end{align}
	Here we introduced 
		\begin{align}
			k(\zeta)=\sup _{x\in \Xpos} \sum_{y\in \Xpos} (1+d(x, y))^{-\zeta}, \quad \zeta>0,
		\end{align}
	and a constant $\xi$, which may depend on $n,K_{\nu}$ and $\eunorm{Q}$. \\
	Replacing $\sigma(Q) \cap [0,\Delta) = \emptyset$ by the assumption $\frac{\pi}{2 \beta}\geq \Delta$, similar bounds hold for the case of $A=Q$ and $B=0$.
	\begin{align}
	\begin{aligned}
			\alpha_C &\leq \xi \left( \Delta^{-(n+1)} +\Delta^{-1}\right) \left( \eunorm{Q} + 2 \Delta \right) \frac{1}{\Delta} \left(\E^{\Delta \Time} -1 \right) \sqrt{k(2n)}k(\nu -n)^{n},\\
			\tilde{\alpha}_C &\leq \xi \left( \Delta^{-(n+1)} +\Delta^{-1}\right) \left( \eunorm{Q} + 2 \Delta \right) \frac{1}{\Delta} \E^{\Delta \Time} \sqrt{k(2n)}k(\nu -n)^{n},
	\end{aligned}
	\end{align}
		where again $\xi$ is a constant depending on $n,K_{\nu}$ and $\eunorm{Q}$.
\end{lemma}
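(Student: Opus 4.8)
The plan is to exploit the commuting structure $A=B=Q$. Under it every block of the covariance \eqref{eq:covariance_A_B_commute} is a single function of the operator $Q$, assembled from the oscillatory factors $\E^{\pm\I tQ}$, the real exponentials $\E^{cQ}$ with $c\le 0$, and the Fermi-type factor $f_\beta\!\big(-(Q-2\tfrac{\Time}{\beta}Q)\big)$; write a generic block as $F(Q)$. By Definition \ref{def:decay_const}, $\alpha_C$ and $\tilde\alpha_C$ are then controlled by spatial sums of the form $\sup_{x\in\Xpos}\sum_{y\in\Xpos}\abs{F(Q)_{x,y}}$, integrated over the second time argument and summed over the contour index for $\alpha_C$, and evaluated at the fixed time $\Time$ for $\tilde\alpha_C$. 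The goal is to replace the crude factor $\abs{\Xpos}$ of Lemma \ref{lem:determinant_bound_and_decay_const} by a convergent spatial sum using the polynomial decay hypothesis on $Q$; so the statement follows once we establish a quantitative, $\abs{\Xpos}$-independent off-diagonal decay bound for $F(Q)$ and then perform the elementary time integrations.

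For the spatial bound I would first apply Cauchy--Schwarz in $y$ against the weight $(1+d(x,y))^{\pm n}$,
\[
\sum_{y}\abs{F(Q)_{x,y}}
\le\Big(\sum_{y}(1+d(x,y))^{-2n}\Big)^{1/2}\Big(\sum_{y}(1+d(x,y))^{2n}\abs{F(Q)_{x,y}}^2\Big)^{1/2}
\le\sqrt{k(2n)}\;\big\|(1+d(x,\cdot))^{n}F(Q)_{x,\cdot}\big\|_{\ell^2},
\]
which isolates $\sqrt{k(2n)}$ and reduces the problem to the weighted $\ell^2$-norm. To estimate it I would represent $F(Q)$ through functional calculus — a resolvent contour $\tfrac{1}{2\pi\I}\oint_\Gamma F(z)(z-Q)^{-1}\,\rd z$ for the exponential and Fermi factors, or a Helffer--Sjöstrand almost-analytic extension to handle the oscillatory $\E^{\I sQ}$ — and trade each power of the weight for an iterated commutator with the multiplication operator $d(\cdot,x)$. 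Each commutator inserts one extra resolvent $(z-Q)^{-1}$ and one factor $[Q,\cdot]$ whose kernel obeys $d(z,w)\abs{Q_{z,w}}\le K_\nu(1+d(z,w))^{-(\nu-1)}$; after $n$ iterations this produces $n+1$ resolvents — bounded via the spectral gap $\sigma(Q)\subset[\Delta,\infty)$ and giving the $(\tfrac\Delta4)^{-(n+1)}+(\tfrac\Delta4)^{-1}$ prefactor — and $n$ decaying kernels whose convolutions over intermediate sites each contribute a factor $k(\nu-n)$, hence $k(\nu-n)^{n}$; here $1\le n<\nu$ guarantees convergence. The remaining gap-dependent scalars come from bounding the Fermi-type function on $[\Delta,\infty)$, giving $\tfrac{1}{1-\E^{-\beta\Delta/2}}$ and $1+\E^{-\beta\Delta/2}$, and from the commutator and contour-length bounds, giving $\eunorm{Q}+\Delta$.

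It remains to reinstate the time dependence and take maxima. For $B>0$ the real exponential $\E^{cQ}$ with $Q\ge\Delta$ supplies genuine temporal decay: reserving part of the rate for the spatial estimate (which sets the contour distance, hence the $\Delta/4$) and retaining the remainder, the $t'$-integral defining $\alpha_C$ converges to $\tfrac1\Delta(1-\E^{-\Time\Delta/4})$ uniformly in $\Time$, whereas for $\tilde\alpha_C$ the time is frozen at $\Time$ and no integration occurs; taking the maximum over the four blocks and the two orderings in Definition \ref{def:decay_const} assembles \eqref{eq:uniform_bound}. For the unitary case $B=0$ the covariance \eqref{eq:C_for_B=0} is purely oscillatory, $\E^{\I(t'-t)Q}f_\beta(\pm Q)$, so there is no real exponential to integrate; instead I would obtain the spatial decay by shifting $\Gamma$ into the complex plane, admissible precisely when $\tfrac{\pi}{2\beta}\ge\Delta$ so that $\Gamma$ stays clear of the poles of $f_\beta$ at $\I\pi(2k+1)/\beta$, and this shift of the oscillatory factor by an amount $\sim\Delta$ replaces the decaying $1-\E^{-\Time\Delta/4}$ by the growth $\E^{\Delta\Time}$.

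The step I expect to be hardest is the spatial decay estimate with explicit, volume-independent constants: combining the commutator expansion that generates $k(\nu-n)^n$ with the gap-dependent functional-calculus bounds in a way that is simultaneously uniform in $\abs{\Xpos}$ and sharp enough to reproduce \eqref{eq:uniform_bound}. The slow, non-compactly-supported decay of the Fermi function — which forces the restriction $n<\nu$ and the careful splitting of the available exponential and spectral rates into the various $\Delta/4$ and $\Delta/2$ fractions — is the delicate point, as is verifying that the resolvent--commutator convolutions genuinely close at the rate $\nu-n$ rather than degrading further.
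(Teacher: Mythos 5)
Your proposal follows essentially the same route as the paper's proof: the paper likewise represents $f_\beta(\pm Q)\,\E^{\I (t'-t)Q}$ as a resolvent contour integral over a rectangle of height $\Delta$ (with $\Delta\le \pi/(2\beta)$ keeping the poles of $f_\beta$ off and outside the contour), invokes the Combes--Thomas-type weighted decay estimate of de~Roeck--Salmhofer \cite[Appendix~B]{dRS} --- whose internals are precisely your Cauchy--Schwarz step plus the iterated-commutator expansion yielding $\sqrt{k(2n)}\,k(\nu-n)^n$ and the factors $d(z,\sigma(Q))^{-(n+1)}+d(z,\sigma(Q))^{-1}$ --- and concludes with the same elementary time integration giving $\tfrac1\Delta(\E^{\Delta\Time}-1)$ for $B=0$ and genuine exponential decay for $B>0$. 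The only difference is one of presentation: the paper works out $B=0$ in detail and declares $B>0$ ``very similar,'' while you sketch both and cite Helffer--Sj\"ostrand as an optional alternative to the contour representation.
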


\begin{proof}
	The proof boils down to a Combes-Thomas estimate and follows the proof in \cite[Appendix~B]{dRS} for details. Here we treat the case $B=0$. The proof for $B>0$ is very similar. By the spectral theorem, we may write
\begin{align}
	f_{\beta}(\pm Q) \E^{i (t'-t) Q} = \frac{1}{2 \pi i} \oint_{\Gamma} \rd z \E^{i (t'-t) z} f_{\beta}(\pm z) \frac{1}{z-Q},
\end{align}
where $\Gamma$ is some curve encircling the spectrum once counterclockwise. For our purpose we choose a rectangle characterized by the sides $\pm \left(\eunorm{Q} + \Delta \right) \times \left[-\Delta, \Delta \right]$. As $\Delta \leq \frac{\pi}{\beta 2}$, $f_{\beta}(\pm z)\leq 1$ for all $z\in \Gamma$. This also ensures that no pole of $f_{\beta}(\pm z)$ lies inside the rectangle.\\
Analogously to the proof in \cite[Appendix~B]{dRS}, we receive
\begin{align}\label{eq:3}
	\begin{aligned}
	\sup_{x \in \Xpos} \int_{{\Xpos}} \rd y \abs{\bracket{x}{f_{\beta}(\pm Q) \E^{i (t'-t)Q} y}} \leq& \xi \sqrt{k(2n)}k(\nu-n)^{n}\\
	&\times \oint_{\Gamma} \abs{dz} \abs{f_{\beta}(\pm z)}\abs{\E^{i (t'-t) z} \left( \frac{1}{d(z,\sigma(Q))^{n+1}} + \frac{1}{d(z,\sigma(Q))} \right)}. 
\end{aligned}
\end{align}
The contour integral can be bounded by
\begin{align}
\begin{aligned}
	\oint_{\Gamma} \abs{dz} \abs{f_{\beta}(\pm z)}\abs{\E^{i (t'-t) z} \left( \frac{1}{d(z,\sigma(Q))^{n+1}} + \frac{1}{d(z,\sigma(Q))} \right)} &\leq \xi \E^{\abs{t'-t} \Delta} \left( \Delta^{-(n+1)}+ \Delta^{-1} \right)(\eunorm{Q}+2 \Delta).
	\end{aligned}
\end{align}
Taking the supremum and integral with respect to $t,t'$ is equivalent and it explicitly gives
\begin{align}
	\sup_{t \in [0,\Time ]} \int_{0}^{\Time} \rd t' \E^{\abs{t'-t}\Delta} = \frac{1}{\Delta} \left( \E^{\Time \Delta} -1 \right).
\end{align}
Putting everything together and absorbing everything into the constant factor $\xi$
	\begin{align}
		\begin{aligned}
		\eunorm{C}_\infty &\leq \sup_{\sigma \in \{+,-\}} \sum_{\sigma} \sup_{t \in[0,\Time ]} \int \rd t' \sup_{x \in \Xpos} \int_{\Xpos} \rd y \abs{\bracket{x}{f_{\beta}(\pm Q) \E^{i (t'-t) Q}y}}\\
		&\leq \xi \left( \Delta^{-(n+1)} +\Delta^{-1}\right) \left( \eunorm{Q} + 2 \Delta \right) \frac{1}{\Delta} \left(\E^{\Delta \Time} -1 \right) \sqrt{k(2n)}k(\nu -n)^{n}.
		\end{aligned}
	\end{align}
	We note that $k(\zeta)$ is invariant under exchange of the roles of $x$ and $y$, as are all arguments made in the proof of the Theorem. Thus the bound holds for $\eunorm{C}_1$, which concludes the proof.
\end{proof}

\begin{remark}
The conditions in Lemma \ref{lem:uniform_decay} can be loosened to include the case $A=\lambda_A Q$ and $B= \lambda_B Q$ with $Q>0$ and $\lambda_A,\lambda_B>0$. However, the resulting bounds for the decay constant are even lengthier than the ones obtained above.
\end{remark}

\noindent
{\bf Acknowledgement. } This work is supported by Deutsche Forschungsgemeinschaft (DFG, German Research Foundation) under Germany's Excellence Strategy  EXC-2181/1 - 390900948 (the Heidelberg STRUCTURES Cluster of Excellence).

\end{document}